\newtheorem{lemma}{Lemma}[section]
\newtheorem{theorem}[lemma]{Theorem}
\newtheorem{corollary}[lemma]{Corollary}
\newtheorem{definition}[lemma]{Definition}
\newtheorem{question}{Question}
\providecommand{\Pr}{\text{Pr}}
\renewcommand{\Pr}{\text{Pr}}
\DeclareRobustCommand{\BCC}{{\textsc{BCC}}}
\DeclareRobustCommand{\CC}{{\textsc{CC}}}
\DeclareRobustCommand{\conn}{\textsc{Connectivity}\xspace}
\DeclareRobustCommand{\connComps}{\textsc{ConnectedComponents}\xspace}
\DeclareRobustCommand{\partition}{\textsc{Partition}\xspace}
\DeclareRobustCommand{\partitionComp}{\textsc{PartitionComp}\xspace}
\DeclareRobustCommand{\multicycle}{\textsc{MultiCycle}\xspace}
\DeclareRobustCommand{\twocycle}{\textsc{TwoCycle}\xspace}
\DeclareRobustCommand{\twopartition}{\textsc{TwoPartition}\xspace}
\DeclareRobustCommand{\poly}{\text{poly}}
\author{Shreyas Pai \\ The University of Iowa \\ \texttt{shreyas-pai@uiowa.edu} \and Sriram V.~Pemmaraju\\ The University of Iowa \\ \texttt{sriram-pemmaraju@uiowa.edu}}
\date{}
\title{Connectivity Lower Bounds in Broadcast Congested Clique\thanks{A short version of this
paper has appeared as a brief announcement in PODC 2019.}}
\begin{document}

\maketitle

\begin{abstract}
We prove three new lower bounds for graph connectivity in the $1$-bit broadcast congested clique model,
BCC$(1)$.
First, in the KT-$0$ version of BCC$(1)$, in which nodes are aware of neighbors only through port numbers,
we show an $\Omega(\log n)$ round lower bound for \conn even for constant-error randomized Monte Carlo algorithms.
The deterministic version of this result can be obtained via the well-known ``edge-crossing'' argument,
but, the randomized version of this result requires establishing new combinatorial results
regarding the indistinguishability graph induced by inputs.
	In our second result, we show that the $\Omega(\log n)$ lower bound result extends to the KT-$1$ version of
the BCC$(1)$ model, in which nodes are aware of IDs of all neighbors, though our proof works only for deterministic
algorithms. Since nodes know IDs of their neighbors in the KT-$1$ model, it is no longer possible to play
``edge-crossing'' tricks;
instead we present a reduction from the 2-party communication complexity problem \partition
in which Alice and Bob are given two set partitions on $[n]$ and are required to determine if
the join of these two set partitions equals the trivial one-part set partition.
While our KT-$1$ \conn lower bound holds only for deterministic algorithms, in our third result
we extend this $\Omega(\log n)$ KT-1 lower bound to constant-error Monte Carlo algorithms for the closely related
\connComps problem.
We use information-theoretic techniques to obtain this result.
All our results hold for the seemingly easy special case of \conn in which an algorithm
has to distinguish an instance with one cycle from an instance with multiple cycles.
Our results showcase three rather different lower bound techniques and lay the groundwork
for further improvements in lower bounds for \conn in the BCC$(1)$ model.

\end{abstract}
\section{Introduction}
\label{sec:introduction}
We are given an $n$-node, completely connected \textit{communication network} in which
each node can broadcast at most $b$ bits in each round.
These $n$ nodes and a subset of the edges of the communication network form the \textit{input graph}.
The question we ask is this: how many rounds of communication does it take to determine if the input graph
is connected? This is the well known \conn\ problem in the \textit{$b$-bit Broadcast Congested Clique},
i.e., the \BCC$(b)$ model.

A series of recent rapid improvements \cite{hegeman15_towar_optim_bound_conges_clique,ghaffari16_mst_log_star_round_conges_clique,jurdzinski18_mst_o_round_conges_clique} have shown that \conn\ and in fact MST, can be solved in $O(1)$ rounds w.h.p.\footnote{We use ``w.h.p.'' as short for ``with high probability'' which
refers to the probability that is at least $1 - 1/n^c$ for $c \ge 1$.} in the
\textit{$b$-bit Congested Clique} model, \CC$(b)$, when $b = \log n$.
The \CC$(b)$ model allows each node to send a possibly different $b$-bit message to each of the other
$n-1$ nodes in the network, in each round.
In contrast, the fastest known algorithm for \conn\ in the \BCC$(\log n)$ model, due to Jurdzi\'{n}ski and Nowicki \cite{JurdzinskiNowickiDISC2017},
is deterministic and it runs in $O\left(\frac{\log n}{\log\log n}\right)$ rounds.
This contrast between \BCC$(b)$ and \CC$(b)$ is not surprising, given how much larger the overall bandwidth in \CC$(b)$ is compared to \BCC$(b)$.
Becker et al.~\cite{BeckerARRCOCOON2016} show that the \textit{pair-wise set disjointness} problem
can be solved in $O(1)$ rounds in \CC$(1)$, but needs $\Omega(n)$ rounds in \BCC$(1)$.
But, despite the fact that \conn\ is such a fundamental problem, no non-trivial lower bound is known for \conn\ in \BCC$(1)$.
In fact, prior to this paper, we could not even rule out an $O(1)$-round
\conn\ algorithm in BCC$(1)$.

Lower bound arguments in ``congested'' distributed computing models typically use a ``bottleneck''
technique \cite{censor-hillel17_quadr_near_quadr_lower_bound_conges_model,czumaj18_detec_conges,das11_distr_verif_hardn_distr_approx,drucker14_power_conges_clique_model,FischerSPAA2018,HolzerPOPODIS2015}. At a high level, this technique consists of showing that there is a low bandwidth
cut in the communication network across which a high volume of information has to flow in order to solve
the given problem.
The lower bound on information flow is usually obtained via 2-party communication complexity lower bounds
\cite{kushilevitz97_commun_compl}.
Not surprisingly, the ``bottleneck'' technique does not work in the \CC$(b)$ model because any cut with
$\Theta(n)$ vertices in each part, has a high bandwidth of $\Theta(n^2 \cdot b)$ bits. In fact, a result of
Drucker et al.~\cite{drucker14_power_conges_clique_model}, showing that circuits can be simulated efficiently
in the Congested Clique model, indicates that no technique we currently know of can prove non-trivial lower bounds in the \CC$(b)$ model.
However, as further shown by \cite{drucker14_power_conges_clique_model}, ``bottlenecks'' are possible for some problems in the weaker \BCC$(b)$ model.
In this model, every cut has bandwidth $O(n \cdot b)$ and for example Drucker et al.~\cite{drucker14_power_conges_clique_model}
provide a reduction showing that for the problem of detecting the presence of a $K_4$ in the input graph
there is a cut across which $\Omega(n^2)$ information has to flow.
This leads to an $\Omega(n/b)$ lower bound for $K_4$-detection in the \BCC$(b)$.

All known lower bounds \cite{drucker14_power_conges_clique_model,HolzerPOPODIS2015} in the \BCC$(\log n)$ model have this general structure and
these techniques work for problems such as fixed subgraph
detection, all pairs shortest paths, diameter computation, etc., that are relatively
difficult, requiring polynomially many rounds to solve.
For ``simpler'' problems such as \conn\ and MST, we need more fine-grained lower bound techniques
that allow us to prove polylogarithmic lower bounds.
Specifically, since \conn\ can be solved in \BCC$(b)$ for any $b \ge 1$ in just $O(\poly(\log n))$ rounds, the best
we can expect is to show the existence of a cut across which $\Omega(n\cdot \poly(\log n))$ volume
of information needs to flow. In fact, the connected components of a subgraph can be represented
in $O(n \log n)$ bits and this is all that needs to communicated across a cut to solve \conn.
Thus the best lower bound we can expect for \conn\ via this technique is an $\Omega(\log n/b)$.
However, even this was unknown prior to this paper and one contribution of this paper is
an $\Omega(\log n/b)$ lower bound for \conn\ using the ``bottleneck'' technique.

\subsection{Our Contribution}

\noindent
We consider the \conn problem and the closely related \connComps\ problem in the \BCC$(1)$ model.
In the latter problem, each node needs to output the label of the connected component it belongs to.
We work in the \BCC$(1)$ model because it allows us to isolate barriers due to different levels of
initial local knowledge (e.g., knowing IDs of neighbors vs not knowing IDs).
This is also without loss of generality because a $t$-round lower bound in \BCC$(1)$ immediately
translates to a $t/b$-round lower bound in \BCC$(b)$.
We consider two natural versions of the \BCC$(1)$ model, that we call KT-0 and KT-1 (using notation from \cite{awerbuch90_trade_off_between_infor_commun_broad_protoc}).
In the KT-0 (``Knowledge Till 0 hops'') version, nodes are unaware of IDs of other nodes in the network and the $n-1$ communication ports
at each node are arbitrarily numbered 1 through $n-1$.
In the KT-1 (``Knowledge Till 1 hop'') version, nodes know all $n$ IDs in the network and the $n-1$ communication ports at each node
are respectively labeled with the IDs of the nodes at the other end of the port.
Note that if the bandwidth \(b = \Omega(\log n)\), then there is essentially no distinction between the KT-0 and KT-1 versions
since each node in the KT-0 version can send its ID to neighbors in constant rounds and then nodes
would have as much knowledge as they initially do in the KT-1 version.
But the difference in initial knowledge plays a critical role when \(b = o(\log n)\) and in
fact our best results in these two models use completely different techniques.
We present three main lower bound results in this paper, derived using very different techniques.
\begin{itemize}
	\item In the KT-0 version of \BCC$(1)$ we show an $\Omega(\log n)$ round lower bound for \conn\ even for constant-error randomized Monte Carlo
		algorithms.
		In fact, the lower bound is shown for the seemingly simpler ``one cycle vs two cycles'' problem in which the
		input graph is either a single cycle or consists of two disjoint cycles and the algorithm has to distinguish
		between these two possibilities.
		We use a well-known \textit{indistinguishability} argument involving ``edge crossing''
		\cite{KormanKPDC2010,baruch15_random_proof_label_schem,patt-shamir17_proof_label_schem} for this result, but the main novelty here
		is how this argument deals with the possibility that the algorithm can err on a constant fraction of the input
		instances. In a standard edge crossing argument one shows that for a particular YES instance (i.e., a connected or ``one-cycle'' instance) $G$, many of the NO instances $G(e, e')$
		obtained by crossing pairs of edges $e$ and $e'$ in $G$ cannot be distinguished even after some $t$ rounds of a BCC$(1)$ algorithm (see Definition \ref{def:portpreservingcrossing} for the precise definition of a crossing).
		But for a randomized lower bound in \BCC$(1)$, it is not enough to consider a single YES instance.
		Instead, we use the bipartite \textit{indistinguishability graph} induced by all
		YES and NO instances and show that this satisfies a polygamous version of
		Hall's Theorem (see Theorem \ref{thm:poly-halls}). This allows us to show the existence of a large generalized matching
		in the indistinguishability graph, which in turn shows that every \(o(\log n)\) round constant-error Monte Carlo
		algorithm can be fooled into making more errors than it is allowed.

	\item We then show that the above lower bound result extends to the KT-1 version of the BCC$(1)$ model, though our proof only works for
		deterministic algorithms. In KT-1, because of knowledge of IDs of neighbors, it is no longer possible to perform ``edge crossing'' tricks.
		But we are able to successfully use the ``bottleneck'' technique and show that there is
		a cut for the \conn\ problem across which $\Omega(n \log n)$ bits need to flow.
		We prove this result by presenting a reduction from the 2-party communication complexity problem
		\partition \cite{hajnal88_commun_compl_graph_proper}. In the \partition\ problem,
		we have a ground set \([n]\) and Alice and Bob respectively are given two
		set partitions \(P_A\) and \(P_B\) of \([n]\). The goal is to output 1 iff
		\(P_A \vee P_B = \mathbf{1}\) where \(P_A \vee P_B\) (read as ``$P_A$ join $P_B$'') is the
		finest partition \(P\) such that both \(P_A\) and \(P_B\) are refinements of \(P\)
                \footnote{Given two set partitions $P$ and $P'$ of $[n]$, $P$ is said to be a \textit{refinement}
                  of $P'$ if for every part \(S \in P\), there is a part \(S' \in P'\) such that \(S \subseteq S'\). For example the partition \((1,2)(3,4)(5)\) is a refinement of \((1,2)(3,4,5)\).}
                and \(\mathbf{1}\) is the trivial partition consisting of the single set \([n]\).
		For example, if $P_A = (1,2)(3,4)(5)$, $P_B = (1,2,4)(3)(5)$, and $P_C = (1,2,4)(3,5)$ then $P_A \vee P_B = (1,2,3,4)(5)$
		and $P_A \vee P_C = (1,2,3,4,5)$.
		We then use the fact that the deterministic communication complexity of \partition\ is
		$\Omega(n \log n)$ to obtain our result.
		Again, this time using a linear-algebraic argument, we show our result for a seemingly simple special case of \conn: ``one cycle vs
		multiple cycles.'' As far as we know, randomized communication complexity of \partition\ is
		a long-standing unresolved problem.
		Showing a lower bound on the randomized communication complexity of \partition will immediately lead to a KT-1 lower bound for randomized \conn algorithms, via our reduction.

	\item Our final result arises from our attempt to obtain a KT-1 lower bound even for constant-error
		Monte Carlo algorithms.
		We consider a version of the \partition\ problem, called \partitionComp, in which Alice and Bob are required
		to output the join of their respective input partitions $P_A$ and $P_B$
		instead of just determining if $P_A \vee P_B = \mathbf{1}$.
		We use an information-theoretic argument to show that the mutual information of any
		algorithm, even a constant-error Monte Carlo algorithm, that solves this version of
		\partition\ is $\Omega(n \log n)$.
		This leads to an $\Omega(\log n)$-round lower bound for \connComps\
		in the KT-1 version of BCC$(1)$, even for constant-error randomized Monte Carlo algorithms.
\end{itemize}

We prove in this paper the first non-trivial lower bounds for \conn in the \BCC$(1)$ model.  The fact that our
lower bounds hold even in the KT-1 model implies that the difficulty of the problem does not arise
just from lack of knowledge of IDs of other nodes. The fact that our lower bounds hold for extremely
sparse (i.e., 2-regular) graphs, suggests that there might be room to get stronger lower bounds by
considering dense input graphs. In fact, using a deterministic sketching technique
\cite{MontealegreTodincaArxiv2016,MontealegreTodincaPODC2016}, it is possible to obtain a
deterministic $O(\log n)$-round BCC(1) algorithm for \conn for graphs with arboricity bounded by a
constant. This implies that our lower bounds are tight for uniformly sparse graphs.


\subsection{The \BCC\((b)\) Model}
\label{sec:bcc-one-def}
A size-$n$ \textit{\textsc{KT-0} instance} of the \BCC$(1)$ model consists of $n$ vertices, each with a unique
$O(\log n)$-bit ID.
Each vertex has $n-1$ \textit{communication ports} labeled distinctly, 1 through $n-1$,
in an arbitrary manner.
A key feature of the \textsc{KT-0} instance is that port labels have nothing to do with IDs.
Pairs of communication ports are connected by network edges
such that the underlying communication network is a clique.
The $n$ vertices along with a subset of the edges form the \textit{input graph}.
Thus some edges are both network edges and input graph edges, whereas the remaining edges are just
network edges.
The initial knowledge of a vertex $v$ consists of its ID, its port numbering, an identification of ports that correspond to input edges, and an arbitrarily long string $r_v$ of random bits.
In each round $t$, each vertex $u$ receives messages via broadcast from the remaining $n-1$ vertices
in the previous round, performs local computation, and broadcasts a message of length at most $b$-bits.
This message is received at the beginning
of round $t+1$ by the remaining $n-1$ vertices along each of their communication ports that connect
to $u$.
After $t$ rounds, the at most \(t\cdot b\) bits that \(v\) sends and the at most \((n-1)\cdot t\cdot b\) bits that \(v\) receives, along with the ports that they are received from make up the transcript of \(v\) at round \(t\).
A size-$n$ \textit{\textsc{KT-1} instance} of the \BCC$(b)$ model differs from a KT-0 instance in
one important way:
each network edge \(e = \{u, v\}\) is connected to \(u\) at port number \(ID(v)\) and connected to \(v\) at port number \(ID(u)\).
Thus, in a \textsc{KT-1} instance, IDs serve as port numbers and the initial knowledge of a vertex consists
include all $n$ vertex IDs.

Since the main focus of the paper is to derive lower bounds, we assume the \textit{public coin
model} in which all the random strings $r_v$ are identical. Lower bounds proved in the public
coin model hold in the \textit{private coin} model as well, in which all the $r_v$'s are
distinct.
For a decision problem, such as \conn, when we run a \BCC\((b)\) algorithm \(\mathcal{A}\) on an
input graph \(G\), each vertex outputs either YES or NO and the output of the system is YES if all
vertices output YES and is NO otherwise. For a deterministic algorithm \(\mathcal{A}\) for \conn\
the system must output YES if \(G\) is connected and NO if \(G\) is disconnected.
If \(\mathcal{A}\) is an \textit{\(\epsilon\)-error randomized Monte Carlo algorithm}, then in order to be correct,
it must satisfy the following requirements:
(i) if \(G\) is connected then the system outputs YES with probability \(> 1 - \epsilon\) and
(ii) if \(G\) is disconnected then the system outputs NO with probability \(> 1 - \epsilon\).


\subsection{Related Work}
\label{sec:related-work}
\textsc{Congest} model \cite{peleg00_distr_comput} lower bounds via the ``bottleneck technique'' that rely on communication complexity lower bounds have been
shown for MST and related connectivity problems in \cite{das11_distr_verif_hardn_distr_approx} and for
minimum vertex cover, maximum independent set, optimal graph coloring, all pairs shortest paths, and subgraph detection in \cite{censor-hillel17_quadr_near_quadr_lower_bound_conges_model,czumaj18_detec_conges,FischerSPAA2018}.
This approach has also been used to derive \BCC$(\log n)$ lower bounds in \cite{drucker14_power_conges_clique_model,HolzerPOPODIS2015}.
Becker et al.~\cite{BeckerARRCOCOON2016} define a spectrum of congested clique models parameterized by a \textit{range} parameter $r$,
denoting the number of distinct messages
a node can send in a round. Setting $r = 1$ gives us the \BCC$(b)$ model and setting $r = n$ gives us the \CC$(b)$ model.
They show the \textit{pair-wise set disjointness} problem is sensitive to the value of $r$ in the sense that for every pair of ranges $r' < r$,
the problem can be solved provably faster in the model with range $r$ than it can in the model with range $r'$.

Distributed lower bounds via the ``edge crossing'' argument have a long history in distributed computing -- see \cite{KorachMZSICOMP1987} for an example
in the context of proving message complexity lower bounds.
More recent examples \cite{KormanKPDC2010,baruch15_random_proof_label_schem,patt-shamir17_proof_label_schem} appear in the context of \textit{proof-labeling
schemes}.
Informally speaking, a \textit{proof-labeling scheme} consists of a \textit{prover} who labels the vertices of the input configuration
with labels and a \textit{distributed verifier} who is required to verify a predicate (e.g., do the marked edges form an MST?) in one round,
using the help of the prover's labels.
The \textit{verification complexity} of a proof-labeling scheme is the size of the largest message sent by the verifier.
Patt-Shamir and Perry \cite{patt-shamir17_proof_label_schem} show an $\Omega(\log n)$ lower bound on the verification complexity of MST in the broadcast
congested clique model.
An $\Omega(\log n)$ lower bound in the KT-0 version of \BCC$(1)$ for \textit{deterministic} \conn\ algorithms follows
from this result. The high level idea is that if there were a faster \BCC$(1)$ \conn\ algorithm,
the prover could use the transcript of the algorithm at each vertex $v$ as the label at $v$. The verifier could then
broadcast these transcripts and locally, at each vertex $v$, simulate the algorithm at $v$.
Baruch et al.~\cite{baruch15_random_proof_label_schem} show that if there is
a deterministic proof-labeling scheme with verification complexity $\kappa$, then there is a randomized proof-labeling
scheme with one-sided error having verification complexity $O(\log \kappa)$.
Combining this with the fact that MST verification has a deterministic proof-labeling scheme with $O(\log^2 n)$ verification complexity \cite{KormanKPDC2010},
leads to a randomized proof-labeling scheme with $O(\log\log n)$ verification complexity for MST
\cite{baruch15_random_proof_label_schem,patt-shamir17_proof_label_schem}.
This needs to be contrasted with the fact that we show an $\Omega(\log n)$ lower bound for \conn\ in KT-0 \BCC$(1)$ even
for constant-error Monte Carlo algorithms.

There have been recent attempts to combine the edge crossing and bottleneck techniques to obtain lower bounds for triangle detection in the \textsc{Congest} model \cite{abboud17_foolin_views,FischerSPAA2018}. In particular, \cite{FischerSPAA2018} provide an \(\Omega(\log n)\) lower bound for deterministic algorithms solving triangle detection in the KT-1 \textsc{Congest} model with \(1\)-bit bandwidth.

\section{Technical Preliminaries}
\label{section:techPrelim}

\paragraph{Polygamous Hall's Theorem.}
Let \(G = (L, R, E)\) be a bipartite graph. A \textit{\(k\)-matching} is a subgraph consisting of a set of nodes \(A \subseteq L\) where each \(v \in A\) has edges to nodes in the set \(nbr(v)\) such that \(|nbr(v)| = k\) and \(nbr(u) \cap nbr(v) = \emptyset\) for \(u, v \in A\), \(u \neq v\). The size of a \(k\)-matching is the number of connected components in the subgraph.

\begin{theorem}[Polygamous Hall's Theorem]
  \label{thm:poly-halls}
  Let \(G = (L, R, E)\) be a bipartite graph. If for every \(S \subseteq L\) we have \(|N(S)| \ge k |S|\) then \(G\) has a \(k\)-matching of size \(|L|\).
\end{theorem}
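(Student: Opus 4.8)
The plan is to reduce the statement to the classical Hall marriage theorem by a vertex blow-up on the left side. First I would build an auxiliary bipartite graph \(G' = (L', R, E')\) in which each vertex \(v \in L\) is replaced by \(k\) distinct copies \(v^{(1)}, \dots, v^{(k)}\), each made adjacent in \(G'\) to exactly the vertices of \(N_G(v)\); thus \(|L'| = k|L|\) and \(N_{G'}(v^{(i)}) = N_G(v)\) for every copy. A matching of \(G'\) that saturates \(L'\) is then exactly the object we want: the \(k\) vertices of \(R\) matched to \(v^{(1)}, \dots, v^{(k)}\) can be taken as \(nbr(v)\); they number exactly \(k\) because distinct copies are matched to distinct vertices; two different left copies, whether of the same original or of different originals, receive disjoint images, so \(nbr(u) \cap nbr(v) = \emptyset\) for \(u \neq v\); and since every copy is matched, the set \(A\) of saturated originals is all of \(L\), giving a \(k\)-matching of size \(|L|\).

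Next I would verify Hall's condition for \(G'\). Take any \(S' \subseteq L'\) and let \(S \subseteq L\) be the set of originals having at least one copy in \(S'\). Because all copies of a vertex share its neighborhood, \(N_{G'}(S') = N_G(S)\), while \(|S'| \le k|S|\) since each original contributes at most \(k\) copies. The hypothesis gives \(|N_G(S)| \ge k|S| \ge |S'|\), hence \(|N_{G'}(S')| \ge |S'|\). By Hall's theorem \(G'\) has a matching saturating \(L'\), and unwinding the construction as above yields the claimed \(k\)-matching.

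I do not anticipate a serious obstacle here: the only points needing care are (i) that the size of a \(k\)-matching is \(|A|\), and here \(A = L\), which is immediate once every copy is matched; and (ii) the degenerate cases — if \(k = 0\) the statement is vacuous, and applying the hypothesis to singletons \(S = \{v\}\) already guarantees \(|N_G(v)| \ge k\), so each original indeed has enough neighbors to host its \(k\) copies. Throughout I take \(L\) and \(R\) finite, as in the instances considered in this paper; the only mildly delicate modeling choice is to make sure the \(k\) copies of \(v\) are genuinely distinct vertices of \(L'\) even when \(N_G(v)\) is small, which the construction does by fiat.
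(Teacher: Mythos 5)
Your proof is correct and follows essentially the same route as the paper's: blow up each left vertex into $k$ copies, check Hall's condition for the blown-up graph, apply the classical marriage theorem, and translate the saturating matching back into a $k$-matching of size $|L|$. Your write-up simply spells out the verification of Hall's condition and the unwinding step, which the paper leaves implicit.
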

\begin{proof}
  Make \(k\) copies of each node in \(L\) while keeping \(R\) the same. Now for every \(S \subseteq L\) we have \(|N(S)| \ge |S|\) and by Hall's marriage theorem, we have a matching in the modified bipartite graph which is a \(k\)-matching of size \(|L|\) in the original graph.
\end{proof}

\paragraph{Yao's Minimax Theorem.} The standard way to prove lower bounds on \(\epsilon\)-error randomized algorithms is by invoking
Yao's Minimax Theorem \cite{yao77_probab}.
Let \(RR_\epsilon(P)\) denote the minimum round complexity of any \(\epsilon\)-error randomized algorithm that solves \(P\).
Let \(DR_\epsilon^\mu(P)\) denote the \textit{distributional round complexity} of \(P\), which is the minimum deterministic
round complexity of an algorithm whose input is drawn from the distribution \(\mu\) (known to the algorithm) and the
algorithm is allowed to make error on at most \(\epsilon\) fraction of the input (weighted by \(\mu\)).

\begin{theorem}
  [Yao's Minimax Theorem]
  \label{thm:yao-minimax-bcc}
  For any problem \(P\), \(RR_\epsilon(P) \ge \max_{\mu}\{DR_\epsilon^\mu(P)\}\)
\end{theorem}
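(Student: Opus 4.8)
The plan is to prove the ``easy direction'' of Yao's principle by a simple averaging argument over the shared random string. First I would fix an $\epsilon$-error randomized algorithm $\mathcal{A}$ for $P$ that is optimal, so that its round complexity equals $RR_\epsilon(P) =: t$. Since the paper works in the public-coin model, $\mathcal{A}$ is specified by a distribution over a single random string $r$, and for each fixed value of $r$ the resulting algorithm $\mathcal{A}_r$ is a deterministic $t$-round algorithm. The correctness guarantee of $\mathcal{A}$ (requirements (i) and (ii) in the model definition) says precisely that for \emph{every} input instance $x$ of $P$, $\Pr_r[\mathcal{A}_r \text{ errs on } x] \le \epsilon$.

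Next I would fix an arbitrary input distribution $\mu$ and average the previous inequality over $x \sim \mu$, obtaining $\E_{x \sim \mu}\bigl[\Pr_r[\mathcal{A}_r \text{ errs on } x]\bigr] \le \epsilon$. Because the event ``$\mathcal{A}_r$ errs on $x$'' is a fixed deterministic function of the pair $(r,x)$, I can interchange the order of the two expectations, which yields $\E_r\bigl[\Pr_{x \sim \mu}[\mathcal{A}_r \text{ errs on } x]\bigr] \le \epsilon$. A random variable cannot always exceed its own mean, so there is a particular string $r^\star$ with $\Pr_{x \sim \mu}[\mathcal{A}_{r^\star} \text{ errs on } x] \le \epsilon$.

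Finally I would observe that $\mathcal{A}_{r^\star}$ is a deterministic algorithm that runs in $t$ rounds and errs on at most an $\epsilon$ fraction of inputs weighted by $\mu$; hence it witnesses $DR_\epsilon^\mu(P) \le t = RR_\epsilon(P)$. Since $\mu$ was arbitrary, taking the maximum over $\mu$ gives $\max_{\mu} DR_\epsilon^\mu(P) \le RR_\epsilon(P)$, which is the claim. I do not expect any real obstacle here, as this is the elementary half of the minimax principle; the only point that deserves a sentence of care is that the extracted deterministic algorithm is legitimate in the sense of the definition of $DR_\epsilon^\mu$ --- it is allowed to ``know'' $\mu$, but in fact it does not need to, being merely $\mathcal{A}$ with the coins hardwired to $r^\star$ --- and that hardwiring the coins preserves the round bound $t$ exactly.
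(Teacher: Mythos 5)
Your argument is correct: this is the standard averaging ("easy") direction of Yao's principle, and every step --- fixing an optimal public-coin algorithm, using the per-input error guarantee, exchanging the expectations over $r$ and $x \sim \mu$, and extracting a string $r^\star$ whose deterministic instantiation errs on at most an $\epsilon$ fraction under $\mu$ while preserving the round bound --- is sound. The paper itself gives no proof (it states the theorem citing Yao), so your write-up simply supplies the standard argument the paper implicitly relies on; nothing further is needed.
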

Yao's Minimax Theorem reduces the problem of proving a randomized lower bound to the task of designing a ``hard''
distribution that produces high distributional complexity.

\paragraph{Lower bound for \partition.}
The total number of distinct partitions on a ground set of \(n\) elements is given by the \textit{\(n^{th}\) Bell number} \(B_n\). It is well known that \(B_n = 2^{\Theta(n \log n)}\). This means that the number of
different possible input pairs that Alice and Bob can receive in the \partition\ problem
is \(B_n^2 = 2^{\Theta(n \log n)}\).
Define the matrix \(M^n\) such that \(M^n(i, j) = 1\) if \(P_i \vee P_j = 1\) and \(M^n(i, j) = 0\) otherwise. Note that \(M^n\) is a \(B_n \times B_n\) matrix. Theorem \ref{theorem-dowling-wilson} shows that this matrix is non-singular.

\begin{theorem}
  [\cite{dowling75_whitn_number_inequal_geomet_lattic,welsh10_matroid}]
  \label{theorem-dowling-wilson}
  \(rank(M^n) = B_n\) where \(B_n\) is the \(n^{th}\) Bell number
\end{theorem}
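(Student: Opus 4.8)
The plan is to exhibit an explicit factorization of $M^n$ that reduces the rank question to counting nonzero values of the Möbius function of the partition lattice $\Pi_n$ (all set partitions of $[n]$ ordered by refinement, with maximum element $\mathbf{1}$). Throughout I write $P \preceq Q$ for ``$P$ refines $Q$'', so that $P \vee Q$ is the least upper bound of $P$ and $Q$ and $\{Q : Q \succeq P \text{ and } Q \succeq P'\} = \{Q : Q \succeq P \vee P'\}$, and I let $\mu$ denote the Möbius function of $\Pi_n$.

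First I would rewrite the entries of $M^n$ using the defining identity $\sum_{R \preceq Q \preceq \mathbf{1}} \mu(Q,\mathbf{1}) = [R = \mathbf{1}]$, applied with $R = P_i \vee P_j$:
\[
  M^n(i,j) \;=\; \bigl[\, P_i \vee P_j = \mathbf{1} \,\bigr]
  \;=\; \sum_{Q \in \Pi_n} \bigl[P_i \preceq Q\bigr]\,\bigl[P_j \preceq Q\bigr]\,\mu(Q,\mathbf{1}).
\]
In matrix form this says $M^n = Z\,D\,Z^{\top}$, where $Z$ is the $B_n \times B_n$ zeta matrix with $Z(P,Q) = [P \preceq Q]$ and $D$ is the diagonal matrix with $D(Q,Q) = \mu(Q,\mathbf{1})$. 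Ordering the partitions by any linear extension of $\preceq$ makes $Z$ upper unitriangular, so $\det Z = 1$ and hence $\mathrm{rank}(M^n) = \mathrm{rank}(D)$, which is precisely the number of $Q \in \Pi_n$ with $\mu(Q,\mathbf{1}) \neq 0$.

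It remains to compute $\mu(Q,\mathbf{1})$. The interval $[Q,\mathbf{1}]$ in $\Pi_n$ is order-isomorphic to $\Pi_k$, where $k$ is the number of blocks of $Q$ (partitions coarser than $Q$ correspond bijectively, and order-preservingly, to partitions of the $k$-element set of blocks of $Q$), so $\mu(Q,\mathbf{1}) = \mu_{\Pi_k}(\mathbf{0},\mathbf{1})$. The classical value is $\mu_{\Pi_k}(\mathbf{0},\mathbf{1}) = (-1)^{k-1}(k-1)!$, which one may either quote or re-derive by induction directly from $\sum_{Q \preceq Q' \preceq \mathbf{1}}\mu(Q',\mathbf{1}) = 0$ for $Q \neq \mathbf{1}$. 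Since $(k-1)! \neq 0$ for every $k \ge 1$, we have $\mu(Q,\mathbf{1}) \neq 0$ for all $Q \in \Pi_n$, so $D$ is nonsingular and $\mathrm{rank}(M^n) = B_n$.

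The argument is essentially mechanical once the right reformulation is in hand; the only genuine mathematical input is the Möbius function of $\Pi_n$, and the step most worth checking carefully is the factorization identity $M^n = Z D Z^{\top}$, since that is what converts a statement about joins reaching the top of the lattice into a clean rank computation.
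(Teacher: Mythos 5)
Your proof is correct, and it is essentially the argument behind the cited result: the paper does not prove this theorem itself but quotes it from Dowling--Wilson and Welsh, where nonsingularity is established by exactly your factorization \(M^n = Z\,D\,Z^{\top}\), with \(Z\) the zeta matrix of the partition lattice and \(D\) the diagonal of M\"{o}bius values \(\mu(Q,\mathbf{1}) = (-1)^{k-1}(k-1)! \neq 0\). Your individual steps (the defining M\"{o}bius recurrence applied at \(R = P_i \vee P_j\), unitriangularity of \(Z\) under a linear extension, and the interval isomorphism \([Q,\mathbf{1}] \cong \Pi_k\)) are all sound, so there is nothing to correct.
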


Therefore by Lemma 1.28 of \cite{kushilevitz97_commun_compl} we get the following corollary.
\begin{corollary}
  \label{corollary-det-partition}
  The deterministic 2-party communication complexity of \partition is \(\Omega(n \log n)\)
\end{corollary}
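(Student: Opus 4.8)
The plan is to assemble the corollary from two ingredients that the excerpt has already put in place: the log-rank lower bound for deterministic communication complexity, and the rank computation of Theorem~\ref{theorem-dowling-wilson}. First I would check that $M^n$ is literally the communication matrix of \partition: Alice's input ranges over the $B_n$ partitions of $[n]$ (these index the rows), Bob's input ranges over the same $B_n$ partitions (these index the columns), and the required output is $1$ exactly on the YES-instances, i.e.\ when $P_i \vee P_j = \mathbf{1}$. Hence the communication matrix $M_{\partition}$ equals $M^n$.

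Next I would invoke the standard log-rank bound --- this is Lemma~1.28 of \cite{kushilevitz97_commun_compl} --- which states that for any Boolean function $f$ with communication matrix $M_f$, the deterministic two-party communication complexity satisfies $D(f) \ge \log_2 \mathrm{rank}_{\R}(M_f)$. Applying this to $f = \partition$ and plugging in Theorem~\ref{theorem-dowling-wilson}, which asserts $\mathrm{rank}(M^n) = B_n$, gives $D(\partition) \ge \log_2 B_n$. Finally I would use the well-known asymptotics of the Bell numbers, $B_n = 2^{\Theta(n \log n)}$ (equivalently $\log_2 B_n = n\log_2 n - \Theta(n \log\log n)$, which in particular is $\Theta(n\log n)$), as already noted in the ``Lower bound for \partition'' paragraph, to conclude $D(\partition) = \Omega(n \log n)$.

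There is essentially no obstacle at the level of the corollary itself: all of the combinatorial content has been outsourced to Theorem~\ref{theorem-dowling-wilson} (the Dowling--Wilson result on Whitney numbers of the partition lattice, equivalently the non-singularity of the ``join-equals-top'' incidence matrix of $\Pi_n$), which we take as a black box. The only points that require a moment's care are (i) that the log-rank bound is being applied over a characteristic-zero field, so the rank in Theorem~\ref{theorem-dowling-wilson} must be the rank over $\R$ (or $\mathbb{Q}$) --- which is indeed the setting of that theorem --- and (ii) getting the Bell-number growth rate right, so that $\log_2 B_n$ is genuinely $\Omega(n\log n)$ and not something smaller. Neither of these needs anything beyond citing the stated facts, so the proof is a short chain of implications rather than a new argument.
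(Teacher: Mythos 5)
Your proposal is correct and matches the paper's own argument: the paper likewise identifies $M^n$ as the communication matrix of \partition, applies the log-rank bound (Lemma~1.28 of \cite{kushilevitz97_commun_compl}) together with Theorem~\ref{theorem-dowling-wilson}, and uses $B_n = 2^{\Theta(n\log n)}$ to conclude the $\Omega(n\log n)$ bound. No further comment is needed.
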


\paragraph{Information Theory.} Let \(\mu\) be a distribution over a finite set \(\Omega\) and let \(X\) be a random variable distributed according to \(\mu\). The \textit{entropy} of \(X\) is defined as \(H(X) = -\sum_{x \in \Omega}{\mu(x) \log \mu(x)}\) and the \textit{conditional entropy} of \(X\) given \(Y\) is \(H(X | Y) = \sum_{y} \Pr[Y = y] H(X | Y = y)\) where \(H(X|Y=y)\) is the entropy of the conditional distribution of \(X\) given the event \(\{Y = y\}\). The \textit{joint entropy} of two random variables \(X\) and \(Y\), denoted by \(H(X, Y)\), is just the entropy of their joint distribution.

The \textit{mutual information} between random variables \(X\) and \(Y\) is \(I(X; Y) = H(X) - H(X|Y) = H(Y) - H(Y|X)\) and the \textit{conditional mutual information} between \(X\) and \(Y\) given \(Z\) is \(I(X; Y | Z) = H(X|Z) - H(X|Y, Z)\). See the first two chapters of \cite{cover06_elemen_infor_theor_wiley_series} for an excellent introduction to the basics of information theory.

\section{Lower Bounds in the \textsc{KT-0} model}
\label{sec:indistinguishability-bccone}
This section is devoted to proving the following theorem.
As mentioned earlier, our lower bound applies to the simpler ``one cycle vs two cycles'' problem
which we will call \twocycle.
In this problem, the input is promised to be either a single cycle or
two disconnected cycles, each of length at least 3 and the goal is to distinguish between these two
types of inputs.

\begin{theorem}
\label{thm:kt0-const-lb}
 For a sufficiently small constant \(0 < \epsilon \le 1/2\), the \(\epsilon\)-error randomized round
complexity of the \twocycle problem in the \BCC$(1)$ \textsc{KT-0} model is bounded below by $\Omega(\log n)$.
\end{theorem}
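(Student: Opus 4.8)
The plan is to invoke Yao's minimax theorem (Theorem~\ref{thm:yao-minimax-bcc}): it suffices to produce one distribution $\mu$ over \twocycle inputs so that every \emph{deterministic} $\BCC(1)$ \textsc{KT-0} algorithm running in $t \le c\log n$ rounds (for a suitable constant $c$) errs on more than an $\epsilon$-fraction of $\mu$. The basic tool is a \emph{port-preserving crossing}. Given a one-cycle instance $G$ with cyclic vertex order $v_1,\dots,v_n$ and two edges $e_i=\{v_i,v_{i+1}\}$, $e_j=\{v_j,v_{j+1}\}$ at cyclic distance in $[3,n-3]$, let $G(e_i,e_j)$ be obtained by deleting $e_i,e_j$, adding $\{v_i,v_{j+1}\}$ and $\{v_{i+1},v_j\}$ (which splits the cycle into two cycles, each of length $\ge 3$), and relabelling ports \emph{only} at $v_i,v_{i+1},v_j,v_{j+1}$, swapping at each of these four vertices the port that carried the deleted input edge with the port that carried the communication edge that just became an input edge. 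The point of the relabelling is that at every vertex the \emph{set} of input-edge ports is left unchanged, so $G$ and $G(e_i,e_j)$ present identical initial knowledge to every vertex.

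I then isolate exactly when $\mathcal{A}$ cannot tell these apart. Write $\sigma^G_t(v)\in\{0,1\}^t$ for the string of bits that $v$ broadcasts in the first $t$ rounds of $\mathcal{A}$ on $G$. An induction on the round index shows that the two executions have identical per-vertex transcripts through round $t$ -- hence the same system output if $\mathcal{A}$ halts in $t$ rounds -- if and only if $\sigma^G_t(v_i)=\sigma^G_t(v_j)$ and $\sigma^G_t(v_{i+1})=\sigma^G_t(v_{j+1})$; call such a crossing $\mathcal{A}$-\emph{hidden}. Now the pigeonhole step: the map $i\mapsto(\sigma^G_t(v_i),\sigma^G_t(v_{i+1}))$ takes at most $2^{2t}$ values, so when $t$ is at most a small constant times $\log n$, summing $\binom{\cdot}{2}$ over its level sets gives $\Omega(n^2/2^{2t})$ hidden crossings of $G$, and this count is \emph{independent of $\mathcal{A}$} (the $O(n)$ pairs at forbidden cyclic distance are negligible). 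Applied to any single one-cycle instance this already is the classical deterministic edge-crossing lower bound: a correct deterministic $\mathcal{A}$ outputs YES on $G$ and therefore, wrongly, on every hidden crossing of $G$, forcing $t=\Omega(\log n)$.

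The difficulty for constant error is that which crossings are hidden depends on $\mathcal{A}$, so $\mu$ cannot be concentrated on hidden NO instances in advance; the remedy is to consider an entire algorithm-independent family at once. Let $L$ be a carefully chosen family of one-cycle instances and $R$ the family of two-cycle instances reachable from $L$ by one crossing. For each $\mathcal{A}$ form the bipartite \emph{indistinguishability graph} on $(L,R)$ with an edge between $G$ and $G'$ whenever $G'$ is an $\mathcal{A}$-hidden crossing of $G$. The previous paragraph lower-bounds its left degrees uniformly; if $L,R$ are chosen so that each $G'\in R$ can be ``uncrossed'' back into $L$ in only a bounded number of ways (the two components of a two-cycle pin down the arc of the original cycle, and the port relabelling is then forced), we also get an upper bound on its right degrees, and these two bounds together establish the hypothesis $|N(S)|\ge k|S|$ of the polygamous Hall theorem (Theorem~\ref{thm:poly-halls}) for an appropriate integer $k\ge 1$. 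That theorem then yields a $k$-matching covering all of $L$: a collection of $|L|$ pairwise-disjoint groups, each consisting of one YES instance and $k$ NO instances on all of which $\mathcal{A}$ is forced to return the same answer. Weighting $\mu$ uniformly over the instances appearing in this structure, and using that $\mathcal{A}$ must err on at least one instance of each group, $\mathcal{A}$ errs on at least a $\tfrac{1}{k+1}$ fraction of $\mu$; taking $\epsilon$ below that constant and applying Theorem~\ref{thm:yao-minimax-bcc} gives the $\Omega(\log n)$ bound of Theorem~\ref{thm:kt0-const-lb}.

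The heart of the argument -- and the step I expect to be hardest -- is the middle one: making the port-relabelling in a crossing invisible to each vertex's \emph{entire transcript} rather than merely to its output, and then designing the families $L$ and $R$ (and the weighting on them) so that the indistinguishability graph, which is only a $\Theta(2^{-2t})$-fraction as dense as the full crossing graph, nonetheless satisfies a Hall-type spreading condition strong enough that the forced error provably exceeds the budget $\epsilon$. The routine parts are the induction establishing the hidden-crossing criterion, the pigeonhole count, the exclusion of degenerate crossings, and pinning down the constant in $t=\Theta(\log n)$.
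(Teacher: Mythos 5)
Your high-level plan matches the paper's (Yao's minimax, port-preserving crossings, an indistinguishability graph, and the polygamous Hall theorem), but the step you yourself flag as hardest is exactly where the proposal has real gaps, and the fixes you sketch do not work as stated. First, the claim that each two-cycle instance can be ``uncrossed'' back into $L$ in only a bounded number of ways is false: a two-cycle instance whose cycles have $i$ and $d-i$ active edges arises as a hidden crossing from $i\cdot(d-i)$ distinct one-cycle instances (any active edge of one cycle crossed with any active edge of the other), which is typically polynomial in $n$, not $O(1)$. So there is no constant upper bound on right degrees, and your route to the Hall condition via ``bounded right degrees'' collapses. The paper instead confronts the wildly non-uniform right degrees head on: Lemma \ref{lem:degreecondition} shows each one-cycle instance has $\Omega(d)$ neighbors in every degree class $i(d-i)$, and Lemma \ref{lem:hallscondition} sums over these classes (a harmonic sum) to get $|N(\mathcal{S})|\ge|\mathcal{S}|\cdot\Theta(\log d)$ — a $\Theta(\log n)$-expansion, not a constant one, and this logarithm is essential, not incidental.

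Second, your error accounting is circular or quantitatively insufficient. The $k$-matching you extract depends on $\mathcal{A}$ (hidden crossings are defined by $\mathcal{A}$'s transcripts), so you may not define $\mu$ as ``uniform over the instances appearing in this structure'': under Theorem \ref{thm:yao-minimax-bcc} the distribution must be fixed before the algorithm. If instead you make $\mu$ algorithm-independent and uniform over all of $L\cup R$ with the natural full families, then Lemma \ref{lem:v1-v2-relative-sizes} gives $|R|=\Theta(\log n)\,|L|$, and a matching with constant $k$ forces error only $\Theta(1/\log n)$ per unit of $L$-mass — not a constant. The paper resolves exactly this tension by fixing the distribution in advance as mass $1/2$ uniform on $\mathcal{V}_1$ and $1/2$ uniform on $\mathcal{V}_2$, and then using a $\Theta(\log n)$-matching (guaranteed by Lemma \ref{lem:hallscondition} plus Theorem \ref{thm:poly-halls}, after a pigeonhole step showing $\Omega(n^{0.8})$ active edges when $t=0.1\log_3 n$) so that each star carries comparable YES and NO probability mass; constant error then follows. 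Without either the degree-class lemmas or this mass-balancing device, your argument does not reach a constant error bound. (Two minor points: indistinguishability only needs, and the paper only proves, the ``if'' direction of your claimed equivalence; and broadcasts live in $\{0,1,\bot\}^t$, giving $3^{2t}$ label classes, which changes constants but not the asymptotics.)
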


\noindent
Two KT-0 instances \(I_1\) and \(I_2\) are said to be \textit{indistinguishable}
after \(t\) rounds of an algorithm \(\mathcal{A}\) if the state of each vertex (i.e., the initial
knowledge and the transcript at that vertex) after \(t\) rounds is the same in both the instances.
We first introduce a technical tool called \emph{indistinguishability via port-preserving
crossings}.  This tool has been used to show distributed computing lower bounds in several settings
\cite{KorachMZSICOMP1987,KormanKPDC2010,baruch15_random_proof_label_schem,patt-shamir17_proof_label_schem}
and we heavily borrow notation from \cite{patt-shamir17_proof_label_schem}.
For an edge \(e = (v, u)\) we use the notation \(e(p, q)\) to denote that \(e\) is connected to port
\(p\) at \(v\) and to port \(q\) at \(u\). For this notation to be unambiguous, we must think of the
edge \(e = (v, u)\) as a directed edge \(v \rightarrow u\) even though the graph itself is
undirected.

\begin{definition}
  [Independent Edges \cite{patt-shamir17_proof_label_schem}]
  \label{def:independent-edges}
  Let \(I\) be an instance with input graph \(G = (V, E)\) and let \(e_1 = (v_1, u_1)\) and
  \(e_2 = (v_2, u_2)\) be two edges of \(G\). The edges \(e_1\) and \(e_2\) are said to be
  independent if and only if \(v_1, u_1, v_2, u_2\) are four distinct vertices and
  \((v_1, u_2), (v_2, u_1) \notin E\).  A set of input graph edges is called independent if every
  pair of edges in the set is a pair of independent edges.
\end{definition}

\begin{figure}[t]
  \centering
  \includegraphics[width=0.7\textwidth]{./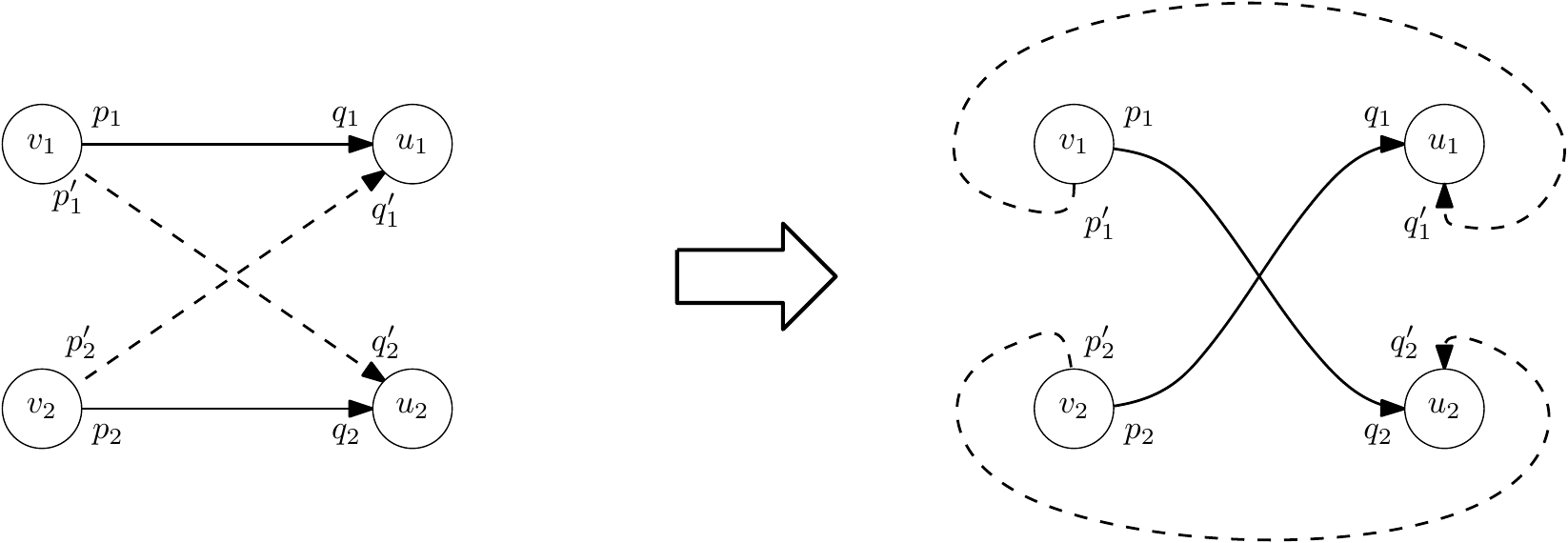}
  \caption{\label{fig:crossing}
	This figure illustrates definition of a port-preserving crossing as per Definition \ref{def:portpreservingcrossing}.
	}
\end{figure}

\vspace{-0.3cm}

\begin{definition}
  [Port-Preserving Crossing \cite{patt-shamir17_proof_label_schem}]
  \label{def:portpreservingcrossing}
  Consider an instance \(I\) with input graph \(G = (V, E)\). Let \(e_1 = (v_1, u_1)\) and
  \(e_2 = (v_2, u_2)\) be two independent edges of \(G\), and let \(e_1' = (v_1, u_2)\) and
  \(e_2' = (v_2, u_1)\) be two corresponding network edges in \(I\).
  Let \(p_1, p_2, q_1, q_2, p_1', q_1', p_2', q_2'\) be eight ports such that
  \(e_1(p_1, q_1), e_2(p_2, q_2), e_1'(p_1', q_2'), e_2'(p_2', q_1')\). The crossing of \(e_1\) and
  \(e_2\) in \(I\), denoted by \(I(e_1, e_2)\), is the instance obtained from \(I\) by replacing
  \(e_1\) and \(e_2\) in \(G\) with the edges \(e_1'\) and \(e_2'\) and rewiring the edges so that
  \(e_1(p_1', q_1'), e_2(p_2', q_2'), e_1'(p_1, q_2),\) and \(e_2'(p_2, q_1)\). (See Figure
\ref{fig:crossing}.)
\end{definition}

\noindent
The following lemma establishes a standard connection between indistinguishability and port-preserving crossings (henceforth ``crossings'') and
is in fact the main motivation for defining crossings. For simplicity, we say that a node sends the character \(\bot\) to denote the fact that the node remains silent. Therefore, the events of a node broadcasting a \(0\), a \(1\), or remaining silent can be described as sending the characters \(0, 1,\) or \(\bot\) respectively.

\begin{lemma}
  \label{lem:crossing-indistinguishable}
  Let \(I\) be an instance with input graph \(G = (V, E)\) and let \(e_1 = (v_1, u_1)\) and
  \(e_2 = (v_2, u_2)\) be two independent edges of \(G\). If \(v_1, v_2\) send the same sequence
  \(x \in \{0, 1, \bot \}^t\) and \(u_1, u_2\) send the same sequence \(y \in \{0, 1, \bot\}^t\) in the first
  \(t\) rounds of the algorithm, then \(I\) is indistinguishable from \(I(e_1, e_2)\) after \(t\) rounds.
\end{lemma}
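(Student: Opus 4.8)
The plan is to argue by induction on the round number $t$ that, under the stated hypotheses, the transcript of every vertex is identical in $I$ and in $I(e_1,e_2)$, from which indistinguishability follows immediately since the initial knowledge (ID, port numbering, identification of input-edge ports) is the same by construction of the crossing. First I would set up a bijection $\phi$ between the network edges of $I$ and those of $I(e_1,e_2)$: $\phi$ is the identity on every edge other than $e_1,e_2$, and it maps the port-endpoint $(v_1,p_1)$ to $(v_1,p_1')$, $(u_1,q_1)$ to $(u_1,q_1')$, $(v_2,p_2)$ to $(v_2,p_2')$, $(u_2,q_2)$ to $(u_2,q_2')$, in exactly the way Definition~\ref{def:portpreservingcrossing} prescribes. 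The key structural observation is that this rewiring is \emph{port-preserving} at every vertex: the multiset of (port label, is-it-an-input-edge) pairs seen by each of $v_1,u_1,v_2,u_2$ is unchanged, so no vertex can tell locally from its initial knowledge which of the two instances it is in. In particular $v_1$ still has an input edge on port $p_1$ (now leading, in $I(e_1,e_2)$, to $u_2$ instead of $u_1$), and symmetrically for the other three endpoints.

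Next I would carry out the induction. The base case $t=0$ is the initial-knowledge equality just noted. For the inductive step, assume every vertex has the same state after round $t-1$ in both instances; then every vertex runs the same local computation and broadcasts the same character in round $t$. It remains to check that every vertex \emph{receives} the same $(n-1)$-tuple of characters-with-ports in round $t$ in both instances. For any vertex $w \notin \{v_1,u_1,v_2,u_2\}$, and for any port of $w$ other than those touching the four special vertices, the neighbor across that port is the same in both instances, so the received character is the same. The only ports whose far endpoint changes under the crossing are the eight listed ports. Here is where the hypothesis is used: in $I$, port $p_1$ of $v_1$ delivers the round-$t$ character of $u_1$ (via edge $e_1$), while in $I(e_1,e_2)$ that same port $p_1$ delivers the round-$t$ character of $u_2$ (via edge $e_1'$); but $u_1$ and $u_2$ broadcast the same sequence $y$, so the character received on port $p_1$ of $v_1$ is $y_t$ in both instances. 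The identical argument handles port $q_1$ of $u_1$ (receives $x_t$ from $v_1$ in $I$, from $v_2$ in $I(e_1,e_2)$, and $v_1,v_2$ both send $x$), port $p_2$ of $v_2$, and port $q_2$ of $u_2$. Thus the full received transcript agrees, the state after round $t$ agrees, and the induction goes through.

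I expect the main obstacle to be purely bookkeeping: carefully verifying that Definition~\ref{def:portpreservingcrossing} really does leave each of the eight relevant (vertex, port) slots occupied by an input edge exactly as before — i.e., that $e_1'$ occupies the slots $(v_1,p_1)$ and $(u_2,q_2)$ in $I(e_1,e_2)$, $e_2'$ occupies $(v_2,p_2)$ and $(u_1,q_1)$, and the re-wired $e_1,e_2$ occupy the primed slots — so that no vertex's local view of ``which ports carry input edges'' changes, and that the independence of $e_1,e_2$ (Definition~\ref{def:independent-edges}) is exactly what guarantees the four vertices are distinct and the new edges $e_1',e_2'$ are genuinely new network edges (not already input edges), so the crossing is well-defined and the input graph of $I(e_1,e_2)$ differs from that of $I$ only in swapping $\{e_1,e_2\}$ for $\{e_1',e_2'\}$. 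Once that is pinned down, the inductive argument above is routine. I would also remark that the lemma is symmetric in the roles of the two endpoints of each edge, and that the hypothesis ``$v_1,v_2$ send the same sequence'' already implicitly uses the inductive conclusion, so it is cleanest to phrase the induction as simultaneously establishing state-equality \emph{and} noting that the send-sequences are well-defined (independent of which instance we are in) up to round $t$.
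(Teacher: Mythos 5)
Your proposal is correct and follows essentially the same route as the paper's proof: induction on the round number, observing that initial knowledge is unchanged by the port-preserving rewiring, that only the four crossed vertices need checking, and that the swapped ports deliver identical characters because $v_1,v_2$ (resp.\ $u_1,u_2$) broadcast the same sequences. Your extra bookkeeping about input-edge port slots and the remark on the hypothesis implicitly relying on the inductive state-equality are fine refinements of the same argument.
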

\begin{proof}
  We will prove the lemma by induction on \(t\). The initial knowledge of each vertex in \(I\) and
  \(I(e_1, e_2)\) is the same so the statement is true for \(t=0\).

  Assume that the lemma is true for some round \(0 \le i \le t\). Therefore, the characters broadcast by the
  vertices in round \(i+1\) will be the same in both the instances. From the definition of port
  preserving crossing it is clear that \(I\) and \(I(e_1, e_2)\) differ only in four edges, \(e_1\),
  \(e_2\), \(e_1' = (v_1, u_2)\), and \(e_2' = (v_2, u_1)\). Therefore, all vertices except
  \(v_1, v_2, u_1\), and \(u_2\) will receive the same characters across all their ports in round \(i+1\)
  in both the instances and hence will have the same state in both instances after round \(i+1\).

  Let the port names of the four edges in \(I\) and \(I(e_1, e_2)\) be as in Definition
  \ref{def:portpreservingcrossing} and Figure \ref{fig:crossing}. In \(I\), the vertex \(u_1\)
  will receive the characters broadcast by \(v_1, v_2\) through ports \(q_1, q_1'\) respectively
  and in \(I(e_1, e_2)\) it will receive the characters broadcast by \(v_2, v_1\) through ports \(q_1,
  q_1'\) respectively. Note that \(v_1\) and \(v_2\) broadcast the same message in round \(i+1\) since
  they send the same sequence \(x\) in the first \(t\) rounds and therefore, the state of \(u_1\)
  after round \(i+1\) will be the same in both instances. We can make similar arguments for
  \(u_2, v_1,\) and \(v_2\) as well. Therefore, the state of each vertex after round \(i+1\) is the
  same in both \(I\) and \(I(e_1, e_2)\) which proves the induction step as well as the lemma.
\end{proof}

\noindent
As a ``warm-up'', we first sketch an easy $\Omega(\log n)$ lower bound for randomized
Monte Carlo algorithms that make \textit{polynomially small error}, i.e., error $\epsilon = 1/n^c$ for constant $c > 0$.
By Yao's minimax theorem (Theorem \ref{thm:yao-minimax-bcc}), it suffices to show a lower bound on the distributional complexity of a
deterministic algorithm under a hard distribution.  Consider the following hard distribution \(\mu\):
Let \(I\) be an arbitrary instance such that the input graph \(G\) of \(I\) is a one-cycle on \(n\)
vertices. Let \(S\) be an arbitrarily chosen set of exactly \(\lfloor n/3 \rfloor\) independent edges
\footnote{Adding an edge to \(S\) invalidates at most two other edges, and therefore we can always
find an independent set \(S\) of size $\lfloor n/3 \rfloor$.} and let \(I(S)\) be the set of all instances
\(I(e, e')\) where \(e, e' \in S\), and therefore, \(|I(S)| = \binom{\lfloor n/3 \rfloor}{2} = \Theta(n^2)\). The
hard distribution \(\mu\) places probability mass \(1/2\) on the instance \(I\) and uniformly
distributes the remaining probability mass among the instances in \(I(S)\).
Now, given a \(t\)-round deterministic algorithm \(\mathcal{A}\) we can assign a \(2t\)-character label to each
edge $(v, u)$ obtained by concatenating the $t$ characters broadcast by $v$ and $u$. Here each character in the label belongs to the alphabet \(\{0, 1, \bot\}\).
The pigeon-hole principle implies that there is a set $S' \subseteq S$, $|S'| \ge n/(3 \cdot 3^{2t})$, of edges in
\(S\) with identical labels.  Then by Lemma \ref{lem:crossing-indistinguishable}, for any $e, e' \in S'$,
\(I\) and \(I(e, e')\) are indistinguishable after \(t\)-rounds of \(\mathcal{A}\).  Since
$\mathcal{A}$ cannot make an error on $I$, it makes errors on all instances $I(e, e')$ where $e, e' \in S'$.
Since $\mu$ assigned the probability mass 1/2 uniformly to all instances in $I(S)$, the probability that
$\mathcal{A}$ makes an error is at least \(|I(S')|/(2|I(S)|) = \binom{|S'|}{2}/\binom{\lfloor n/3 \rfloor}{2} \ge \Omega(3^{-4t})\).
Therefore, if \(t \le 0.001 \cdot c \cdot \log_3 n\), this error becomes \(\Omega(1/n^{0.001c})\)
which is much larger than \(1/n^c\) -- a contradiction, implying that $t > 0.001 \cdot c \cdot \log n$ and
leading to the following theorem.


\begin{theorem}
\label{thm:kt0-whp-lb}
 For any constant \(c > 0\), if \(\epsilon \le 1/n^c\) then the \(\epsilon\)-error randomized round
complexity of the \conn problem in the \BCC$(1)$ \textsc{KT-0} model is \(\Omega(c \cdot \log n)\).
\end{theorem}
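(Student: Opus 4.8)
The plan is to invoke Yao's minimax theorem (Theorem \ref{thm:yao-minimax-bcc}): it suffices to exhibit one input distribution $\mu$ and show that no deterministic $t$-round algorithm can be correct on a $(1-\epsilon)$-fraction of the $\mu$-mass unless $t = \Omega(c\log n)$. I would take $\mu$ exactly as in the discussion preceding the statement. Fix an instance $I$ whose input graph is a one-cycle on $n$ vertices, and fix a set $S$ of $\lfloor n/3\rfloor$ pairwise-independent edges of that cycle, all oriented consistently along a chosen traversal of the cycle, and chosen so that no two of them lie within cycle-distance two of each other. Let $\mu$ place mass $1/2$ on $I$ and spread the remaining $1/2$ uniformly over the set $I(S) = \{\,I(e,e') : e,e' \in S\,\}$ of the $\binom{\lfloor n/3\rfloor}{2} = \Theta(n^2)$ crossing instances.

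First I would record the combinatorial facts. Crossing two independent cycle edges $e_1=(v_1,u_1)$ and $e_2=(v_2,u_2)$ replaces them with $(v_1,u_2)$ and $(v_2,u_1)$; because the traversal orientation is consistent, this splits the single cycle into exactly two vertex-disjoint cycles, and the distance-two spacing of $S$ forces each piece to have length at least $3$, so every $I(e,e') \in I(S)$ is a legitimate NO instance of \twocycle. Since $\mu(I) = 1/2 > \epsilon$, any algorithm whose $\mu$-error is at most $\epsilon$ must output YES on $I$.

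Next comes the pigeonhole step. Given a deterministic $t$-round algorithm $\mathcal A$, its execution on $I$ assigns to each oriented edge $(v,u)\in S$ a label in $\{0,1,\bot\}^{2t}$ obtained by concatenating the $t$-character broadcast transcript of $v$ with that of $u$. There are at most $3^{2t}$ labels, so some monochromatic $S' \subseteq S$ has $|S'| \ge \lfloor n/3\rfloor / 3^{2t}$. For $e = (v_1,u_1), e' = (v_2,u_2) \in S'$, equal labels mean $v_1,v_2$ broadcast the same sequence and $u_1,u_2$ broadcast the same sequence in the first $t$ rounds of $\mathcal A$ on $I$, so Lemma \ref{lem:crossing-indistinguishable} makes $I$ and $I(e,e')$ indistinguishable after $t$ rounds; thus $\mathcal A$ outputs YES on the NO instance $I(e,e')$ and errs there. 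The error instances include all $\binom{|S'|}{2}$ crossings in $I(S')$, each of $\mu$-mass $\tfrac12\binom{\lfloor n/3\rfloor}{2}^{-1}$, so $\mathcal A$'s $\mu$-error is at least $\tfrac12\binom{|S'|}{2}/\binom{\lfloor n/3\rfloor}{2} = \Omega(3^{-4t})$ whenever $|S'|\ge 2$. If $t \le 0.1\,c\log_3 n$, this is $\Omega(n^{-0.4c}) \gg n^{-c} = \epsilon$, contradicting correctness; hence $t = \Omega(c\log n)$, and Yao's theorem yields the claimed randomized lower bound (the argument uses only \twocycle, so it applies to \conn).

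I do not expect a genuine obstacle here: this is the warm-up, and the crude pigeonhole is enough precisely because $\epsilon$ is polynomially small, so losing the $3^{2t}$ factor is affordable. The only care needed is to orient $S$ consistently (so the label of $(v,u)$ matches the hypotheses of Lemma \ref{lem:crossing-indistinguishable}) and to space $S$ out so every crossing produces two cycles of length $\ge 3$. The genuinely hard case is constant $\epsilon$, where one cannot absorb a $3^{2t}$ loss and must instead run the polygamous Hall's theorem (Theorem \ref{thm:poly-halls}) on the full bipartite indistinguishability graph between YES and NO instances; that is handled separately in Theorem \ref{thm:kt0-const-lb}.
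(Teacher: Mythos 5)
Your proposal is correct and follows essentially the same route as the paper's own proof: the same Yao reduction, the same hard distribution (mass $1/2$ on the fixed one-cycle instance $I$ and $1/2$ spread uniformly over the crossings $I(S)$), the same $2t$-character pigeonhole combined with Lemma \ref{lem:crossing-indistinguishable}, and the same $\Omega(3^{-4t})$ error bound yielding the contradiction (your extra care about orientation and spacing of $S$, and your slightly different constants, are immaterial).
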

\begin{proof}
Note that since the probability mass on \(I\) is so large, any algorithm with permissible error
probability must output YES on \(I\) and therefore, it will also output YES on all instances that
are indistinguishable from \(I\).

Given a \(t\)-round deterministic algorithm \(\mathcal{A}\) we can assign a \(2t\)-character label to each
edge $(v, u)$ where each character belongs to the alphabet \(\{0, 1, \bot\}\). The label is assigned such
that the head $v$ sends the \(i^{th}\) character of the label and the tail $u$ sends
the \((t + i)^{th}\) character of the label in round \(i\) for all edges.  By using the pigeon hole
principle, we see that there is a set $S' \subseteq S$, $|S'| \ge n/(3 \cdot 3^{2t})$, of edges in
\(S\) with identical labels.  By Lemma \ref{lem:crossing-indistinguishable}, for any $e, e' \in S'$,
\(I\) and \(I(e, e')\) are indistinguishable after \(t\)-rounds of \(\mathcal{A}\).  Therefore, any
\(t\) round algorithm will make an error on instances $I(e, e')$ where $e, e' \in S'$ and this makes
the error at least \(\binom{|S'|}{2}/\binom{\lfloor n/3 \rfloor}{2} \ge \Omega(3^{-4t})\).
Therefore, if \(t \le 0.001 \cdot c \cdot \log_3 n\), this error becomes \(\Omega(1/n^{0.001c})\)
which is much larger than \(1/n^c\).
\end{proof}

\noindent
The hard distribution $\mu$ that led to the above theorem fails to give even a super-constant round
lower bound for constant error probability.
This is because for any constant $\epsilon$, there is a constant $t$ such that the error probability $|I(S')|/(2|I(S)|)$
of algorithm $\mathcal{A}$ is smaller than $\epsilon$, leading to no contradiction.


\subsection{A Lower Bound for Constant Error Probability}
\label{sec:crossing-constant}
To get around this problem, we start with the observation that a two-cycle instance $I(e, e')$
obtained from $I$, can also be obtained by crossing edges in other one-cycle instances, i.e., $I(e, e')
= I'(f, f')$ for edges $f, f'$ in an instance $I' \not= I$.
Thus, as the algorithm executes, even though $I(e, e')$ ceases to be indistinguishable from $I$,
it may continue to be indistinguishable from $I'$.
This suggests that we should be considering all one-cycle and two-cycle instances and all
the edge crossings that lead from one-cycle instances to two-cycle instances.
This motivates the definition below of a bipartite \textit{indistinguishability graph} with all
one-cycle and two-cycle instances as vertices.
In the proof of Theorem \ref{thm:kt0-whp-lb}, when we placed the entire probability mass on a single ``star'' indistinguishability graph with $I$
being the central node and instances in $I(S)$ being the leaves, we ran into trouble because the
degree of $I$ in this ``star'' shrank too quickly with the number of rounds, $t$.
If we consider the full indistinguishability graph, we have more leeway. Specifically,
showing the existence of a large matching in the indistinguishability graph would be helpful since
the algorithm is forced to make an error at one of the two endpoints of each matching edge.
We formalize this intuition below, first with some definitions.

Let the set of distinct one-cycle
and two-cycle instances be \(\mathcal{V}_1\) and \(\mathcal{V}_2\) respectively let $\mu$ be a
probability distribution on these.
Let \(\mathcal{A}\) be a \(t\)-round deterministic \textsc{KT-0}
algorithm which solves the \twocycle problem correctly on \((1-\epsilon)\) fraction of input in the
support of \(\mu\) (recall, \(\epsilon\) is a constant).
For any instance \(I \in \mathcal{V}_1 \cup \mathcal{V}_2\), call an edge \(e = (v, u)\) in the
input graph of \(I\) \textit{active} with respect to strings \(x, y \in \{0, 1, \bot\}^t\) iff \(v\)
broadcasts the sequence given by \(x\) and \(u\) broadcasts the sequence given by \(y\) in
the first \(t\) rounds of the algorithm \(\mathcal{A}\). We call an edge active if the strings
\(x, y\) are clear from the context.
\begin{definition}[Indistinguishability Graph]
  Let \(t\) be a non-negative integer and let \(x, y \in \{0, 1, \bot\}^{t}\) be two strings of length \(t\).
  The indistinguishability graph with respect to messages \(x\) and \(y\) after \(t\) rounds of
  algorithm \(\mathcal{A}\) is a bipartite graph
  \(\mathcal{G}^t_{x, y} = (\mathcal{V}_1, \mathcal{V}_2, \mathcal{E}^t)\)
  where \(\mathcal{V}_1\) is the set of all one-cycle instances and \(\mathcal{V}_2\) is the set of
  all two-cycle instances and there is an edge \(\{I_1, I_2\} \in \mathcal{E}^t\) iff \(I_1 \in \mathcal{V}_1\)
  and \(I_2 \in \mathcal{V}_2\) and there exist two active independent directed edges \(e_1 = (v_1, u_1)\)
  and \(e_2 = (v_2, u_2)\) in the input graph of \(I_1\) such that \(I_2 = I_1(e_1, e_2)\).
\end{definition}



\noindent
We now propose to use a rather natural hard distribution \(\mu\) that assigns probability mass \(1/2\) distributed uniformly among the instances
in \(\mathcal{V}_1\) and the remaining probability mass \(1/2\) distributed uniformly among the
instances in \(\mathcal{V}_2\).
We first prove Lemma \ref{lem:degreecondition} that plays a crucial role in our overall proof by essentially
showing that every one-cycle instance has sufficiently many two-cycle neighbors in $\mathcal{G}^t_{x, y}$
with high degree.
This in turn is used in Lemma \ref{lem:hallscondition} to prove that a Polygamous Hall's Theorem (Theorem \ref{thm:poly-halls}) condition holds
for $\mathcal{G}^t_{x, y}$.
This allows us to show that $\mathcal{G}^t_{x, y}$ can be packed with $|\mathcal{V}_1|$ ``stars,''
each with $\Theta(\log n)$ leaves.
We need this generalized notion of a matching because as shown in Lemma \ref{lem:v1-v2-relative-sizes},
\(|\mathcal{V}_2| = |\mathcal{V}_1| \cdot \Theta(\log n)\).
Therefore, the probability mass assigned to an instance in $\mathcal{V}_2$ is $1/\Theta(\log n)$ fraction
of the probability mass assigned to an instance in $\mathcal{V}_1$.
Thus, a ``star'' with its central node from $\mathcal{V}_1$ and $\Theta(\log n)$ leaves from $\mathcal{V}_2$
has roughly equal probability mass assigned to the YES instance and NO instances.

\begin{lemma}
  \label{lem:degreecondition}
  Consider an arbitrary instance \(I_1 \in \mathcal{V}_1\) that is a vertex of \(\mathcal{G}^t_{x, y}\).
  If \(d \ge 1\) is the number of active edges of \(I_1\) with respect to \(x, y\) then for every
  \(i, 3 \le i \le d/2\), \(I_1\) has at least \(d/2\) neighbors of degree \(i \cdot (d-i)\).
\end{lemma}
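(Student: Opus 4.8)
The plan is, for a fixed $i$ with $3 \le i \le d/2$, to exhibit a list of $d$ two-cycle instances adjacent to $I_1$ in $\mathcal{G}^t_{x,y}$, each of degree exactly $i(d-i)$, and then to argue that at least $d/2$ of them are distinct. The hypothesis $3 \le i \le d/2$ already forces $d \ge 6$, so I may assume this. I also assume throughout that $x \neq y$; then no edge is active in both orientations, so the active edges of any instance carry a well-defined cyclic order around its cycle(s), which is the setting the main argument needs.

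First I would record two structural facts about crossings. (i) If $J \in \mathcal{V}_1$ has $d$ active edges and $f, f'$ are active independent edges of $J$, then in $J' := J(f,f')$ every vertex has the same transcript as in $J$ by Lemma~\ref{lem:crossing-indistinguishable}, so every surviving edge keeps its active status; the two edges newly created by the crossing are active by Definition~\ref{def:portpreservingcrossing} and, by independence, were not already present. Hence $J'$ again has exactly $d$ active edges — those of $J$ other than $f, f'$, together with the two new ones. (ii) For a two-cycle instance $I_2 = C_A \sqcup C_B$, the degree of $I_2$ in $\mathcal{G}^t_{x,y}$ equals the number of active edges of $C_A$ times the number of active edges of $C_B$: its one-cycle neighbours are exactly the instances obtained by crossing an active edge of $C_A$ with an active edge of $C_B$ (crossing within a single cycle would instead produce a three-cycle instance, which is not a vertex of $\mathcal{G}^t_{x,y}$); every such pair is independent because its endpoints lie in different components; and distinct pairs give distinct instances because an instance determines which two of its edges were removed.

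Now list the active edges of $I_1$ as $g_0, \dots, g_{d-1}$ in cyclic order (indices mod $d$), and for each $a$ set $I_2^{(a)} := I_1(g_a, g_{a+i})$, the crossing of $g_a$ with the $i$-th active edge following it. Tracing the crossing along the cycle, one of the two cycles of $I_2^{(a)}$ collects exactly the $i-1$ active edges of $I_1$ strictly between $g_a$ and $g_{a+i}$ together with one of the two newly created (active) edges, giving it $i$ active edges, while the other cycle collects the remaining $d-i$ active edges; by fact (ii), $I_2^{(a)}$ has degree $i(d-i)$. Moreover $I_2^{(a)}$ is genuinely adjacent to $I_1$: a crossing of two edges of a cycle is of independent edges (Definition~\ref{def:independent-edges}) exactly when both resulting cycles have at least three vertices, and here one resulting cycle has at least $i \ge 3$ edges and the other at least $d - i \ge d/2 \ge 3$ edges, hence each has at least three vertices; so $g_a$ and $g_{a+i}$ are active independent edges of $I_1$ and $\{I_1, I_2^{(a)}\} \in \mathcal{E}^t$.

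Finally, each $I_2^{(a)}$ is determined by the unordered pair $\{g_a, g_{a+i}\}$, and $\{g_a, g_{a+i}\} = \{g_b, g_{b+i}\}$ forces $a = b$ or $2i \equiv 0 \pmod d$; when $i < d/2$ the latter cannot hold so all $d$ of the $I_2^{(a)}$ are distinct, and when $i = d/2$ the pairs repeat with period $d/2$, leaving exactly $d/2$ distinct instances. In either case $I_1$ has at least $d/2$ neighbours of degree $i(d-i)$. I expect the only non-routine part to be the two structural facts of the second paragraph — that a crossing preserves the number of active edges, and that the degree of a two-cycle instance factors as a product — which rely on the indistinguishability of an instance and its crossing and on crossing being an involution; everything after that is bookkeeping along the cycle, with the validity check (at least three vertices in each piece) being exactly why the statement restricts to $i \ge 3$ and $i \le d/2$.
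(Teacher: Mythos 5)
Your proof follows essentially the same route as the paper's: a crossing preserves the count of \(d\) active edges, a two-cycle instance whose cycles carry \(i\) and \(d-i\) active edges has degree \(i\cdot(d-i)\), and pairing each active edge with the \(i\)-th active edge after it along the cycle yields \(d\) (or \(d/2\) when \(i = d/2\)) distinct neighbors of that degree, with \(3 \le i \le d/2\) guaranteeing independence --- the paper is merely terser about the distinctness count and about excluding \(i \in \{1,2\}\). Your added hypothesis \(x \neq y\) is not actually needed (and is not what your ``crossing splits the cycle into two cycles'' step relies on; that step, like the paper's, implicitly uses that the active edges are taken with a consistent traversal orientation of the cycle, as fixed in the proof of Theorem \ref{thm:kt0-const-lb}), but otherwise the argument matches.
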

\begin{proof}
  A two-cycle instance \(I_2 \in \mathcal{V}_2\) will be a neighbor of \(I_1\) iff \(I_1\) and \(I_2\)
  form a pair of crossed instances with respect to \(x, y\). Say \(I_2 = I_1(e, e')\) where \(e = (v, u)\)
  and \(e' = (v', u')\). Note that  \(I_2\) will have two new input graph edges \((v, u')\) and \((u, v')\)
  both of which are active and all input graph edges of \(I_1\) except for \(e, e'\) appear in the
  input graph of \(I_2\). Therefore, \(I_2\) also has \(d\) active edges with respect to
  \(x, y\). The degree of \(I_2\) is determined by the number of active edges either cycle, i.e., if
  \(I_2\) has \(i\) active edges in one cycle and \(d - i\) active edges in the other cycle then its
  degree in \(\mathcal{G}^t_{x, y}\) is \(i \cdot (d-i)\) since we can take one active edge from
  either cycle and cross them to produce a unique neighbor of \(I_2\).

  For every active edge \(e\) in the input graph of \(I_1\), we can associate a unique active edge
  \(e_i\) such that \(I_1(e, e_i)\) has \(i\) active edges in one cycle and \(d - i\) active edges in
  the other cycle. Therefore, \(I_1\) has exactly \(d\) (or \(d/2\) if \(i = d/2\)) neighbors having
  degree \(i(d-i)\). This argument may not hold exactly for \(i = 1, 2\) because \(e\) and \(e_i\) as
  described need not form a pair of independent edges in this case. Thus, the lemma follows.
\end{proof}

\begin{lemma}
  \label{lem:hallscondition}
  For the graph \(\mathcal{G}^t_{x, y}\), consider an arbitrary set \(\mathcal{S} \subseteq
  \mathcal{V}_1\) of one-cycle instances with degree at least \(1\). Let \(N(\mathcal{S})\) be the
  neighborhood of \(\mathcal{S}\) in \(\mathcal{G}^t\). Then \(|N(\mathcal{S})| \ge |\mathcal{S}|
  \cdot \Theta(\log d)\) where \(d\) is the smallest number of active edges in any instance in
  \(\mathcal{S}\).
\end{lemma}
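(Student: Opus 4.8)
The plan is to establish the Hall-type expansion bound by a \emph{fractional double-counting} argument in \(\mathcal{G}^t_{x,y}\). To each edge \(\{I_1, I_2\}\) with \(I_1 \in \mathcal{V}_1\) and \(I_2 \in \mathcal{V}_2\) I attach the weight \(1/\deg(I_2)\) (degree in \(\mathcal{G}^t_{x,y}\)), and I sum these weights in two ways. Summing first over \(I_2 \in N(\mathcal{S})\), the weight contributed by \(I_2\) is \(|\{I_1 \in \mathcal{S}: I_1 \sim I_2\}|/\deg(I_2) \le 1\) (since the graph is bipartite with \(\mathcal{S} \subseteq \mathcal{V}_1\)), so the total is at most \(|N(\mathcal{S})|\). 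Summing instead over \(I_1 \in \mathcal{S}\), and using that every neighbour of \(I_1\) lies in \(N(\mathcal{S})\), the total equals \(\sum_{I_1 \in \mathcal{S}} \sum_{I_2 : I_2 \sim I_1} 1/\deg(I_2)\). Thus it suffices to prove that for every \(I_1 \in \mathcal{S}\) we have \(\sum_{I_2 \sim I_1} 1/\deg(I_2) = \Omega(\log d)\).

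For the latter I would appeal to Lemma~\ref{lem:degreecondition}. Let \(d' \ge d\) be the number of active edges of \(I_1\) with respect to \(x, y\). For every integer \(i\) with \(3 \le i \le d'/2\), Lemma~\ref{lem:degreecondition} gives at least \(d'/2\) neighbours of \(I_1\) of degree exactly \(i(d'-i)\); since \(i \mapsto i(d'-i)\) is strictly increasing on \(\{1,\dots,\lfloor d'/2\rfloor\}\), neighbours coming from distinct values of \(i\) have distinct degrees and are therefore distinct vertices, so these families are pairwise disjoint. Restricting to \(3 \le i \le \lfloor d'/4\rfloor\) (so that \(d'-i \le d'\)) thus yields
\[
\sum_{I_2 : I_2 \sim I_1} \frac{1}{\deg(I_2)} \;\ge\; \sum_{i=3}^{\lfloor d'/4 \rfloor} \frac{d'/2}{i(d'-i)} \;\ge\; \sum_{i=3}^{\lfloor d'/4 \rfloor} \frac{1}{2i} \;=\; \Omega(\log d') \;\ge\; \Omega(\log d),
\]
where the penultimate step is the standard estimate on partial sums of the harmonic series. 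Combining with the double count gives \(|N(\mathcal{S})| \ge \sum_{I_1 \in \mathcal{S}} \Omega(\log d) = |\mathcal{S}|\cdot\Omega(\log d)\), which is the claim.

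A few details still need to be discharged, none of them deep. First, one must confirm that a crossing preserves the set of active edges — this is immediate from Lemma~\ref{lem:crossing-indistinguishable}, since after \(t\) rounds every vertex sends the same sequence in \(I_1\) and in \(I_1(e_1,e_2)\) — so that each claimed neighbour genuinely has \(d'\) active edges and the degree formula of Lemma~\ref{lem:degreecondition} applies to it verbatim. Second, the degenerate regime in which \(d\) is below a fixed constant must be handled separately, but it is cheap: instances \(I_1\) with few active edges have all their neighbours of \(O(1)\) degree, while instances with many active edges are already covered by the harmonic estimate above, whose value is \(\Omega(1) = \Omega(\log d)\) in this regime; the exclusion of \(i \in \{1,2\}\) and the truncation at \(d'/4\) likewise cost only constant factors. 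I expect this bookkeeping, rather than any conceptual difficulty, to be the bulk of the work; the one substantive idea is that the neighbour contributions must be summed over \emph{all} admissible values of \(i\) — not a geometrically spaced subset — since it is precisely the resulting harmonic sum that supplies the \(\Theta(\log d)\) gain over a trivial \(\Omega(1)\) bound.
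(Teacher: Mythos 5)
Your proof is correct and takes essentially the same route as the paper: both arguments rest on Lemma~\ref{lem:degreecondition} and extract the \(\Theta(\log d)\) factor from the harmonic sum over the degree classes \(i(d-i)\), your weighting of each edge by \(1/\deg(I_2)\) being just a fractional repackaging of the paper's count of degree-\(i(d-i)\) instances in \(N(\mathcal{S})\). If anything, your per-instance use of \(d' \ge d\) (together with the explicit treatment of constant \(d\)) is slightly more careful than the paper's proof, which applies the degree formula with the minimum value \(d\) uniformly to every instance in \(\mathcal{S}\).
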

\begin{proof}
  Every \(I \in \mathcal{S}\) has at least \(d\) active edges, therefore by Lemma
  \ref{lem:degreecondition}, there are at least \(d/2\) neighbors of \(I\) having degree \(i \cdot
  (d-i)\) for \(3 \le i \le d/2\). Thus there are at least \((d/2) \cdot |\mathcal{S}| / (i \cdot
  (d-i)) = \Theta(|\mathcal{S}| / i)\) two-cycle instances in \(N(\mathcal{S})\) having degree \(i \cdot
  (d-i)\). Therefore, we have \(|N(\mathcal{S})| \ge \sum_{i = 3}^{d/2}{\Theta(|\mathcal{S}| / i)}\)
  \(= |\mathcal{S}| \cdot \Theta(H_{d/2} - 3/2) \ge |\mathcal{S}| \cdot \Theta(\log d) \), where
  \(H_n\) is the \(n^{th}\) harmonic number.
\end{proof}

\begin{lemma}
  \label{lem:v1-v2-relative-sizes}
  \(|\mathcal{V}_2| = |\mathcal{V}_1| \cdot \Theta(\log n)\).
\end{lemma}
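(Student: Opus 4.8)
The plan is to reduce the claim to an elementary count of the underlying input graphs and then evaluate that count. First I would observe that the ratio $|\mathcal{V}_2|/|\mathcal{V}_1|$ equals the ratio of the number of distinct one-cycle input graphs to the number of distinct two-cycle input graphs on the vertex set $[n]$: every input graph $G$ on $[n]$ extends to a \textsc{KT-0} instance by completing $G$ to the clique $K_n$, choosing a port bijection at each vertex (there are $((n-1)!)^n$ such choices, and reading off which two ports at each vertex are input ports is then forced), so the number of instances sharing a given input graph is $((n-1)!)^n$ (times a further graph-independent factor if the IDs themselves are allowed to vary). Since this multiplier depends only on $n$ and not on the structure of $G$, it suffices to compare $N_1$, the number of Hamiltonian cycles on $[n]$, with $N_2$, the number of $2$-regular graphs on $[n]$ with exactly two components, each of length at least $3$.

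Counting $N_1$ is standard: a Hamiltonian cycle is a cyclic order of $[n]$ up to rotation and reflection, so $N_1 = (n-1)!/2$. For $N_2$, I would specify a two-component graph by an unordered partition $\{A,B\}$ of $[n]$ with $|A| = k$ for some $3 \le k \le n-3$, together with one of the $(k-1)!/2$ cycles on $A$ and one of the $(n-k-1)!/2$ cycles on $B$ (the $\tfrac12$'s being for the reflection in each cycle). Summing over $k$ double counts each graph once for each of its two component sizes --- with the lone $k = n/2$ term not affecting the asymptotics --- so
\[
  N_2 \;=\; \frac{1}{2}\sum_{k=3}^{n-3}\binom{n}{k}\cdot\frac{(k-1)!}{2}\cdot\frac{(n-k-1)!}{2} \;=\; \frac{1}{8}\sum_{k=3}^{n-3}\frac{n!}{k(n-k)}.
\]
Using $\frac{1}{k(n-k)} = \frac{1}{n}\bigl(\frac{1}{k} + \frac{1}{n-k}\bigr)$, the sum becomes a difference of harmonic numbers, $\sum_{k=3}^{n-3}\frac{n!}{k(n-k)} = \frac{2\,n!}{n}\sum_{k=3}^{n-3}\frac{1}{k} = \frac{2\,n!}{n}\bigl(H_{n-3} - H_2\bigr)$, where $H_m$ is the $m$-th harmonic number. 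Hence $N_2 = \frac{(n-1)!}{4}\bigl(H_{n-3} - H_2\bigr) = \Theta\bigl((n-1)!\log n\bigr)$, and therefore $|\mathcal{V}_2|/|\mathcal{V}_1| = N_2/N_1 = \frac{1}{2}\bigl(H_{n-3}-H_2\bigr) = \Theta(\log n)$, which is the claim.

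There is no deep obstacle here. The only two points needing care are (i) the reduction in the first paragraph, i.e.\ arguing that the data of a \textsc{KT-0} instance beyond its input graph contributes a multiplicative factor that is identical for one-cycle and two-cycle input graphs, so that counting instances is the same as counting input graphs; and (ii) the bookkeeping in the formula for $N_2$ --- the $\tfrac12$ for each cycle's two orientations and the $\tfrac12$ for the unordered pair of components --- together with confirming that the exceptional $k=n/2$ term and the truncation of the harmonic tail to $k \ge 3$ leave the $\Theta(\log n)$ answer unchanged. If one prefers, $N_2$ can equivalently be obtained from the unsigned Stirling number of the first kind, $c(n,2) = (n-1)!\,H_{n-1}$, by discarding the $O((n-1)!)$ permutations having a cycle of length $\le 2$ and dividing by $4$ for the two orientations of the two cycles.
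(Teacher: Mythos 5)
Your proposal is correct, but it proves the lemma by a genuinely different route than the paper. You factor out the instance bookkeeping (observing that each labeled input graph on $[n]$ extends to the same number of \textsc{KT-0} instances, namely the $((n-1)!)^n$ port bijections times a graph-independent ID factor) and then directly enumerate labeled $2$-regular graphs: $N_1=(n-1)!/2$ Hamiltonian cycles versus $N_2=\frac18\sum_{k=3}^{n-3}\frac{n!}{k(n-k)}=\frac{(n-1)!}{4}(H_{n-3}-H_2)$ two-cycle graphs, giving the exact ratio $\tfrac12(H_{n-3}-H_2)=\Theta(\log n)$. (Minor slip: your opening sentence states the ratio of one-cycle to two-cycle graphs, inverted relative to the computation you actually carry out; also, the $k=n/2$ term needs no special treatment, since the global factor $\tfrac12$ already corrects the double count there.) The paper instead never leaves the indistinguishability-graph machinery: it considers $\mathcal{G}^0_{\lambda,\lambda}$, where every instance has $n$ active edges, obtains $|\mathcal{V}_2|\ge|\mathcal{V}_1|\cdot\Omega(\log n)$ by applying Lemma \ref{lem:hallscondition} (which rests on Lemma \ref{lem:degreecondition}) with $\mathcal{S}=\mathcal{V}_1$, and obtains the matching $O(\log n)$ upper bound by double counting edges between $\mathcal{V}_1$ and the sets $\mathcal{T}_i$ of two-cycle instances with smaller cycle length $i$, using the exact degrees $n$ and $i(n-i)$. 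The underlying harmonic sum $\sum_i \frac{n}{i(n-i)}$ is the same in both arguments; what differs is the counting vehicle. Your approach is more elementary and yields an exact constant, at the cost of having to justify (as you do) that the per-graph multiplier of port numberings and IDs is uniform; the paper's approach works directly with instance counts, so it needs no such normalization, and it reuses the degree and Hall-condition lemmas that are needed anyway for Theorem \ref{thm:kt0-const-lb}.
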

\begin{proof}
  Let \(\mathcal{G} = \mathcal{G}^0_{\lambda, \lambda}\) (\(\lambda\) is the empty string) be the
  indistinguishability graph at round \(0\). Note that in \(\mathcal{G}\), every instance in
  \(\mathcal{V}_1 \cup \mathcal{V}_2\) has strictly positive degree since each instance has \(n\)
  active edges. Therefore, we have \(|\mathcal{V}_1| = |N(\mathcal{V}_2)|\) and \(|\mathcal{V}_2| =
  |N(\mathcal{V}_1)|\). Therefore, by Lemma \ref{lem:hallscondition}, we have \(|\mathcal{V}_2| =
  |\mathcal{V}_1| \cdot \Omega(\log n)\). Now we show that \(|\mathcal{V}_2| = |\mathcal{V}_1| \cdot
  O(\log n)\).

  Since each instance has \(n\) active edges, each one-cycle instance \(I_1\) has degree
  \(n(n-3)/2\) because for each input graph edge \(e\) of \(I_1\) there are \((n-3)\) active edges
  independent of \(e\), which we can cross with to get a unique neighbor of \(I_1\). We need to divide
  by a factor of two because \(I_1(e, e') = I_1(e', e)\). And each two-cycle instance \(I_2\) with the
  smaller cycle having length \(i\) has degree \(i \cdot (n-i)\) since we can cross any two edges in
  different cycles to get a neighbor of \(I_2\).

  Let \(\mathcal{T}_i\) denote the set of two-cycle instances with the smaller cycle having length
  \(i\) for \(3 \le i \le n/2\).

  For every input graph edge \(e\) in a one-cycle instance \(I\), there is exactly one input graph
  edge \(e_i\) such that \(I(e, e_i) \in \mathcal{T}_i\). Therefore, for \(3 \le i < n/2\), each one
  cycle instance has \(n\) neighbors such that the smaller cycle is of length \(i\). And if \(n\) is
  even, each one-cycle instance will have \(n/2\) neighbors where both cycles have length \(n/2\)
  instead.

  We will now show that \(|\mathcal{T}_i| \le |\mathcal{V}_1| \cdot n/(i \cdot (n-i))\). To see this
  note that if we restrict our attention to the subgraph of \(\mathcal{G}\) spanned by instances in
  \(\mathcal{V}_1 \cup \mathcal{T}_i\) then we have a bipartite graph where each instance in
  \(\mathcal{V}_1\) has the same degree \(n\) (or \(n/2\) if \(i = n/2\)) and each instance in
  \(\mathcal{T}_i\) has the same degree \(i \cdot (n-i)\). Therefore, the total number of edges
  incident on \(\mathcal{V}_1\) is \(\le |\mathcal{V}_1| \cdot n\) and those incident on
  \(\mathcal{T}_i\) is \(|T_i| \cdot i \cdot (n-i)\). Since the number of edges should be the same
  counted from either side, we get \(|\mathcal{T}_i| \le |\mathcal{V}_1| \cdot n/(i \cdot
  (n-i))\). Now we finish the proof of the lemma with the following calculation:
  \[|\mathcal{V}_2| = \sum_{i = 3}^{n/2}{|\mathcal{T}_i|} \le \sum_{i}{\frac{n}{i \cdot (n-i)} \cdot |\mathcal{V}_1|} = |\mathcal{V}_1| \cdot O(\log n)\]
\end{proof}

\begin{proof}(of Theorem \ref{thm:kt0-const-lb})
Consider an arbitrary one-cycle instance \(I_1 \in \mathcal{V}_1\) after \(t = 0.1 \log_3 n\) rounds
of algorithm \(\mathcal{A}\). Let \(x, y \in \{0, 1, \bot\}^t\) be the strings that correspond to the largest
set of active edges after \(t\)-rounds of algorithm \(\mathcal{A}\). We would like to count the size
of this set of active edges. Recall that we orient each input graph edge of \(I_1\) in a clockwise
direction. Therefore, each input graph edge in \(I_1\) can be labeled with a string of length \(2t\) which
denotes messages sent across it from the head and the tail (in order) across the \(t\) rounds. This
means that there are at least \(n/3^{2t} = n^{0.8}\) input graph edges in \(I_1\) that have the same
messages sent across them. Therefore, the size of the set of active edges with respect to \(x, y\)
is at least \(\Omega(n^{0.8})\).

By Lemma \ref{lem:hallscondition} and Theorem \ref{thm:poly-halls}, we can say that there exists a
\(\Theta(\log n)\)-matching in \(\mathcal{G}^t_{x, y}\) of size \(|\mathcal{V}_1|\).
No matter what the algorithm \(\mathcal{A}\) outputs on any one-cycle instance, it will produce the
same output on the matched \(O(\log n)\) two-cycle instances. By Lemma
\ref{lem:v1-v2-relative-sizes}, we know that for any \(I_1 \in \mathcal{V}_1\) and
\(I_2 \in \mathcal{V}_2\), \(\mu(I_1) = \mu(I_2) \cdot \Theta(\log n)\) Therefore, each instance
\(I_1 \in \mathcal{V}_1\) contributes to \(\Theta(\mu(I_1))\) the error of the algorithm which means that any
\(t\)-round \BCC\((1)\) algorithm will have total error at least a constant. This implies the
theorem.
\end{proof}
\section{Lower Bounds in the \textsc{KT-1} Model}
\label{sec:partition-deterministic}
Our lower bounds in the KT-1 model are inspired by the work of Hajnal et
al.~\cite{hajnal88_commun_compl_graph_proper}, which is concerned with 2-party
communication complexity of several graph problems, including \conn. In their setup
\cite{hajnal88_commun_compl_graph_proper}, the input graph \(G = (V, E)\) is
\textit{edge-partitioned} among Alice and Bob in such a way that both parties know $V$ and Alice and
Bob respectively know edge sets $E_A$ and $E_B$, were $(E_A, E_B)$ forms a partition of $E$.
One simple deterministic protocol that solves \conn\ in this setup is this: Alice sends all the
connected components induced by \(E_A\) to Bob, who can determine if \(G\) is connected. The worst
case communication complexity of this protocol is \(O(n \log n)\).  Via reduction from \partition,
Hajnal et al. \cite{hajnal88_commun_compl_graph_proper} show that there exists a family of input
graphs such that for \emph{any equal sized} edge partition, the communication complexity of \conn\
is \(\Omega(n \log n)\).

It does not seem possible to reduce from this edge-partitioned version of 2-party \conn\ to \conn\
in the KT-1 model because KT-1 algorithms are vertex-centric and Alice and Bob may not hold all the
edges they need to simulate vertices executing a KT-1 algorithm.  We resolve this issue by designing
a new reduction, from \partition\ to a vertex-partition version of 2-party \conn.  In the Hajnal et
al.~\cite{hajnal88_commun_compl_graph_proper} reduction, \partition\ is reduced to \conn\ on a
family of dense graphs.  Motivated by our KT-0 lower bound for \conn\ for the \twocycle problem, we are interested in deriving a KT-1 \conn\ lower bound for a \textit{sparse}
class of graphs as well.
In what follows, we extend the reduction of Hajnal et al.~from \partition\ to \conn\ in two
important ways: (i) we reduce to a vertex-partitioned version of \conn\ and (ii) we reduce to a
sparse special case of \conn\ that we call the \multicycle\ problem, in which the input is
either a single cycle or two or more cycles, each having length at least \(4\).

\subsection{A Special Case of the \partition\ Problem}
\label{sec:partition-problem}

In order to establish a lower bound for \multicycle, we now consider a special case of the 2-party
\partition\ problem, which we call \twopartition. The input to \twopartition\ consists of partitions
$P_A$ and $P_B$ of \([n]\), for even $n$, such that each part in $P_A$ and $P_B$ has
exactly two elements in it.  We will now use a linear algebraic argument to show that there is an
$\Omega(n \log n)$ deterministic lower bound on this special case of \partition\ also.  The $0$-$1$
matrix \(E^n\) associated with this problem is a sub-matrix of the matrix \(M^n\) where \(M^n(i, j)
= 1\) if \(P_i \vee P_j = 1\) and \(M^n(i, j) = 0\) otherwise (see Section \ref{section:techPrelim}).
The matrix \(E^n\) has dimension \(r \times r\) where $r = n!/(2^{n/2} \cdot (n/2)!)$. This fact
follows from a simple counting argument. In the following theorem, we show that this sub-matrix
$E^n$ has full rank.

\begin{lemma}
  \label{lem:rank-submatrix}
  \(rank(E^n) =  r\) where \(r = n!/(2^{n/2} \cdot (n/2)!)\).
\end{lemma}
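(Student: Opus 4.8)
The plan is to transfer the non-singularity of the full join matrix $M^n$ (Theorem \ref{theorem-dowling-wilson}) down to the principal submatrix $E^n$ picked out by the perfect matchings, using the Möbius factorization of $M^n$ over the partition lattice.

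I would first record the factorization $M^n = Z D Z^{\mathsf T}$, where $Z(P,R)=1$ iff $P$ refines $R$ — so $Z$ is unitriangular with respect to any linear extension of the refinement order, hence non-singular — and $D$ is diagonal with $D(R,R)=\mu(R,\mathbf 1)$. This is just Möbius inversion: $M^n(P,Q)=[P\vee Q=\mathbf 1]=\sum_{R\succeq P\vee Q}\mu(R,\mathbf 1)=\sum_R Z(P,R)\,\mu(R,\mathbf 1)\,Z(Q,R)$, since $R$ is coarser than $P\vee Q$ precisely when $P$ and $Q$ both refine $R$. Together with $\operatorname{rank}(M^n)=B_n$ this forces $\mu(R,\mathbf 1)\neq 0$ for every $R$ (alternatively, $\mu(R,\mathbf 1)=(-1)^{k-1}(k-1)!$ with $k$ the number of blocks of $R$).

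Next I would restrict rows and columns to the set $\mathcal M$ of perfect matchings, $|\mathcal M|=r$, giving $E^n=Y D Y^{\mathsf T}$ where $Y$ is the submatrix of $Z$ with rows in $\mathcal M$. Two facts simplify $Y$: a perfect matching refines $R$ only if every block of $R$ is even-sized, so only the columns of $Y$ indexed by ``even'' partitions are nonzero (I discard those columns, together with the corresponding diagonal entries of $D$); and the refinement order on $\mathcal M$ is trivial ($M$ refines $M'$ iff $M=M'$), so — ordering the even partitions with the matchings first — $Y=[\,I_r\mid B\,]$. Hence $Y$ has full row rank $r$ and
\[
E^n = c_n I_r + B\,D_2\,B^{\mathsf T},\qquad c_n=\mu(M,\mathbf 1)=(-1)^{n/2-1}(n/2-1)!\neq 0,
\]
with $D_2$ the diagonal of (nonzero, sign-varying) Möbius values over the non-matching even partitions.

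The remaining step — where essentially all the difficulty lies — is to show this last matrix is non-singular; full row rank of $Y$ does not suffice, because $D$ is sign-indefinite, so one must show the symmetric form $\sum_{R\text{ even}}\mu(R,\mathbf 1)\,x_R y_R$ is non-degenerate on the row space of $Y$. I would establish this by computing $\det E^n$: by Cauchy--Binet, $\det E^n=\sum_T(\det Y_{\cdot,T})^2\prod_{R\in T}\mu(R,\mathbf 1)$ over $r$-element sets $T$ of even partitions, with the term $T=\mathcal M$ contributing $c_n^{\,r}$. To prove this sum is nonzero I would, for instance, run an induction on $n$ using the block decomposition of $E^n$ obtained by grouping matchings according to the partner of a fixed element: two matchings sharing an edge are never joined to $\mathbf 1$ (the shared edge is its own connected component), so all diagonal blocks of $E^n$ vanish, and the off-diagonal blocks reduce to connectivity matrices of smaller instances, which I would then control by the inductive hypothesis. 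Making that recursion precise and controlling the signs so that cancellation cannot destroy the determinant is the main obstacle.
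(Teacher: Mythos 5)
Your reduction is set up correctly as far as it goes: the factorization $M^n=ZDZ^{\mathsf T}$ with $Z$ unitriangular and $D(R,R)=\mu(R,\mathbf 1)=(-1)^{k-1}(k-1)!\neq 0$, the restriction $E^n=YDY^{\mathsf T}$ with $Y=[\,I_r\mid B\,]$ supported on the even partitions, and hence $E^n=c_nI_r+BD_2B^{\mathsf T}$ with $c_n=(-1)^{n/2-1}(n/2-1)!$, are all accurate. But the proposal stops exactly where the lemma begins. The Cauchy--Binet expansion you write down contains terms of both signs, since the sign of $\prod_{R\in T}\mu(R,\mathbf 1)$ varies with the block counts of the partitions in $T$, and you offer no mechanism preventing these terms from cancelling the single term $c_n^{\,r}$. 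The inductive sketch does not close this either: the off-diagonal block indexed by partners $(a,b)$ of the fixed element is indeed a copy of $E^{n-2}$, but only after a row bijection and a column bijection that each depend on the pair $(a,b)$ (delete the edge through the fixed element and identify $a$ with $b$), so $E^n$ is not $(J_{n-1}-I_{n-1})\otimes E^{n-2}$ up to one global change of basis, the determinant does not factor, and no sign control is supplied. As you yourself say, this is ``the main obstacle''; since it is the entire content of the lemma, what you have is a framework, not a proof.

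This is also not the paper's route. The paper argues purely from Theorem \ref{theorem-dowling-wilson}: multiplying $M^n$ by a $0$--$1$ diagonal matrix and invoking Sylvester's rank inequality, it claims that keeping any set of rows and the corresponding columns of a full-rank matrix yields a full-rank principal submatrix, with no reference to matchings or to the partition lattice. Your instinct that full rank of $M^n$ alone cannot force this is sound: a full-rank symmetric $0$--$1$ matrix can have singular principal submatrices (already $\bigl(\begin{smallmatrix}0&1\\1&0\end{smallmatrix}\bigr)$, whose $1\times 1$ principal submatrices are zero), which is precisely the indefiniteness you isolate, so any complete argument must use specific properties of the matching rows rather than only $rank(M^n)=B_n$. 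Be aware, moreover, that $E^n$ is exactly the ``matchings connectivity matrix'' studied in the Hamiltonicity lower-bound literature (Cygan, Kratsch and Nederlof; Curticapean, Lindzey and Nederlof), where its rank over the rationals is reported to be $2^{\Theta(n)}$, exponentially smaller than $r=2^{\Theta(n\log n)}$ for large $n$; this is a strong warning that the missing non-singularity step should not be presumed fillable, and any attempt to complete your determinant calculation has to confront it directly.
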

\begin{proof}
  We will prove a more general observation -- every sub-matrix \(A_S\) of a full rank \(d \times d\)
  matrix \(A\) formed by choosing a subset \(S\) of the rows and the corresponding columns has rank
  \(s\) where \(s = |S|\). In other words, for all \(S\), \(A_S\) is a full rank \(s \times s\)
  matrix.

  Let \(B\) be a \(d \times d\) diagonal matrix where \(B(i, i) = 1\) if \(i \in S\) and \(B(i, i) = 0\)
  if \(i \notin S\). It is easy to see that \(rank(B) = |S| = s\). Using basic properties of rank,
  \(rank(AB) \le rank(B) \le s\) and by Sylvester's rank inequality \footnote{For any two \(n \times
    n\) matrices \(A, B\), \(rank(AB) \ge rank(A) + rank(B) - n\). We can prove this inequality by
    applying the rank-nullity theorem to the inequality \(null(AB) \le null(A) + null(B)\).},
  \(rank(AB) \ge rank(A) + rank(B) - d = d + s - d = s\).

  Therefore, \(rank(AB) = s\) which means that some minor of \(AB\) having dimension \(s\) needs to
  be of full rank. The only such candidate is the minor corresponding to the matrix \(A_S\) because
  all other minors of dimension \(s\) either have an all zero row or all zero column. Therefore,
  \(A_S\) has full rank.

  Now \(E^n\) is a submatrix of \(M^n\) where the rows and columns correspond to partitions of
	\([n]\) such that each part has exactly two elements in it. Therefore, the lemma follows
  since \(M^n\) has full rank.
\end{proof}

\noindent
By using Stirling's approximation, it can be verified that $r = 2^{\Theta(n \log n)}$.  Then, by the
rank bound and Lemma 1.28 of \cite{kushilevitz97_commun_compl} we get the following corollary.

\begin{corollary}
  \label{corollary-det-eqpartition}
  The deterministic 2-party communication complexity of \twopartition is \(\Omega(n \log n)\)
\end{corollary}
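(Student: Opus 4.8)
The plan is to invoke the classical rank lower bound for deterministic communication complexity together with the rank computation already established in Lemma~\ref{lem:rank-submatrix}. Recall (Lemma~1.28 of \cite{kushilevitz97_commun_compl}) that for any Boolean function $f$ with communication matrix $M_f$, the deterministic two-party communication complexity satisfies $D(f) \ge \log_2 \mathrm{rank}(M_f)$, where the rank may be taken over any field (in particular over $\mathbb{R}$). For the \twopartition\ problem the communication matrix is precisely $E^n$: its rows and columns are indexed by the partitions of $[n]$ into parts of size exactly two, and $E^n(i,j) = 1$ iff $P_i \vee P_j = \mathbf{1}$. The first step is therefore just to observe that $E^n$ is genuinely the communication matrix of \twopartition, i.e.\ that restricting Alice's and Bob's inputs in \partition\ to pair-partitions yields exactly the principal submatrix of $M^n$ on those indices, with the entries unchanged.

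Next I would pin down the dimension $r$ of $E^n$. The number of partitions of $[n]$ (with $n$ even) into pairs is the double factorial $(n-1)!! = n!/(2^{n/2}(n/2)!)$, which is exactly the quantity $r$ appearing in Lemma~\ref{lem:rank-submatrix}; this is the ``simple counting argument'' alluded to in the statement. Applying Stirling's approximation $m! = \Theta\!\big(\sqrt{m}\,(m/e)^m\big)$ to the three factorials in $r = n!/(2^{n/2}(n/2)!)$, the $2^{n/2}$ and the powers of $1/e$ contribute only a $2^{O(n)}$ factor, while the dominant term is $n^n/(n/2)^{n/2} = 2^{n/2} n^{n/2}$, so $r = 2^{(1/2)n\log_2 n + O(n)} = 2^{\Theta(n\log n)}$ and hence $\log_2 r = \Theta(n \log n)$.

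Finally I would combine these facts. By Lemma~\ref{lem:rank-submatrix} we have $\mathrm{rank}(E^n) = r$, so
\[
D(\twopartition)\ \ge\ \log_2 \mathrm{rank}(E^n)\ =\ \log_2 r\ =\ \Theta(n\log n),
\]
which in particular is $\Omega(n\log n)$, as claimed. Since the substantive part — showing that the relevant principal submatrix of $M^n$ has full rank — is already handled by Lemma~\ref{lem:rank-submatrix} (itself reduced to the non-singularity of $M^n$ from Theorem~\ref{theorem-dowling-wilson}), there is essentially no obstacle remaining; the only points needing care are the two bookkeeping items above, namely verifying that $E^n$ is indeed the \twopartition\ matrix and checking the Stirling estimate $r = 2^{\Theta(n\log n)}$ (so that the rank bound yields $n\log n$ and not merely $n$). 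Both are routine.
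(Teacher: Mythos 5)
Your proposal is correct and follows exactly the paper's route: apply the log-rank lower bound (Lemma 1.28 of Kushilevitz--Nisan) to the full-rank matrix $E^n$ from Lemma~\ref{lem:rank-submatrix} and verify via Stirling that $r = n!/(2^{n/2}(n/2)!) = 2^{\Theta(n\log n)}$. The only difference is that you spell out the counting and Stirling bookkeeping that the paper leaves as a remark, which is fine.
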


We describe our reductions in the next two subsections. In section \ref{sec:partition-to-2-party-conn}, we reduce the \partition (\twopartition) problem to the vertex partitioned 2-party \conn (2-party \multicycle) problem and in section \ref{sec:bcc-simulation}, we reduce the 2-party \conn (2-party \multicycle) problem to \conn\ (\multicycle) in the KT-1 model.

\subsection{Reductions from \partition\ and \twopartition}
\label{sec:partition-to-2-party-conn}
Here we present two reductions, first from \partition\ to 2-party \conn\ and next from
\twopartition\ to 2-party \multicycle. Alice is given a partition \(P_A = (S_1, S_2, \dots, S_n)\) over the ground set \([n]\)
where \(S_i\) is the \(i^{th}\) part of \(P_A\), which could possibly be
empty if \(P_A\) has fewer than \(i\) parts. Similarly, Bob is given a partition
\(P_B = (S_1', S_2', \dots, S_n')\).
They construct a graph \(G(P_A, P_B)\) as follows: Alice creates vertex sets \(A = \{a_1, \dots, a_{n}\}\)
and \(L = \{\ell_1, \dots, \ell_{n}\}\) whereas Bob creates the vertex sets \(R = \{r_1, \dots, r_{n}\}\)
and \(B = \{b_1, \dots, b_{n}\}\).  Alice and Bob add edges \((\ell_i, r_i)\) for \(i \in [n]\), independent of \(P_A\) and \(P_B\). Alice adds edges between \(A\) and \(L\)
that induce the partition \(P_A\) on \(L\). That is, for every \(S_i \in P_A\), Alice adds edges
\((a_i, \ell_j)\) for all \(j \in S_i\).
There will be some vertices in \(A\) that are not connected to any vertex,
so Alice just adds an edge between these vertices and an arbitrary vertex \(\ell_* \in L\). Bob
similarly adds edges between the sets \(B\) and \(R\). See Figure \ref{fig:partition-reduction}.

\begin{figure}[t]
\centering
\includegraphics[width=.8\linewidth]{./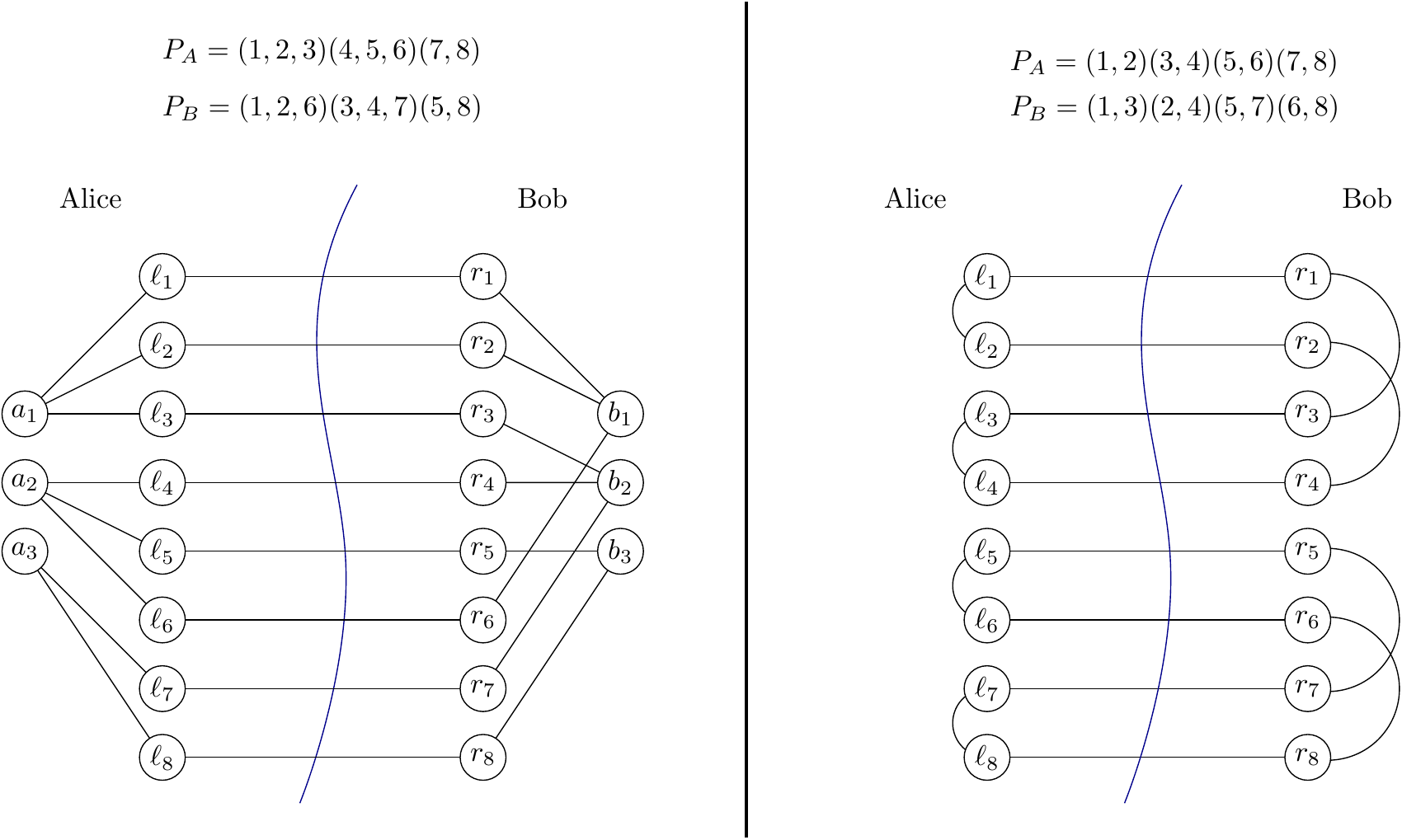}
\caption{\label{fig:partition-reduction} The figure on the left illustrates the reduction from
\partition to 2-party \conn and the figure on the right illustrates the reduction from \twopartition
to 2-party \multicycle.
The vertices \(a_4, \dots,
a_8\) that are connected to \(\ell_* = \ell_8\) and \(b_4, \dots, b_8\) connected to \(r_* = r_8\)
are not shown in the left figure.}
\end{figure}

If \(P_A\) and \(P_B\) are instances of \twopartition, that is, each part of \(P_A\) and
\(P_B\) is of size exactly two, then we can modify the construction of \(G(P_A, P_B)\) by getting
rid of the sets \(A\) and \(B\). Note that in this case \(P_A = (S_1, S_2, \dots, S_{n/2})\) and
\(P_B = (S_1', S_2', \dots, S_{n/2}')\) where each \(S_i\) and \(S_i'\) has size exactly two. If
\(\{i, j\} \in P_A\) then Alice creates an edge between \(\ell_i\) and \(\ell_j\) and Bob does the
same with \(R\) for every pair in \(P_B\). With this modified construction, each vertex in \(G(P_A, P_B)\)
has degree exactly \(2\) and therefore, every connected component of \(G(P_A, P_B)\) will be
a cycle. See Figure \ref{fig:partition-reduction}.

The following theorem encapsulates a crucial property of the graph \(G(P_A, P_B)\) which implies
the correctness of our reductions.
\begin{theorem}
  \label{thm:partition-to-2p-connectivity}
  If \(P_A\) and \(P_B\) are instances of \partition (or \twopartition), then the partition induced
  by the connected components of \(G(P_A, P_B)\) on the vertices in \(L\) and \(R\) corresponds to the
  partition \(P_A \vee P_B\).
\end{theorem}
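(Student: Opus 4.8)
The plan is to reduce the statement to the standard order-theoretic description of the join: $P_A \vee P_B$ is exactly the partition of $[n]$ into the connected components of the auxiliary graph $H$ on vertex set $[n]$ in which $i$ and $j$ are adjacent whenever they lie in a common block of $P_A$ or a common block of $P_B$. I would state this characterization first --- it is immediate from the description of the join as the least upper bound in the partition lattice --- and then match connected components of $G(P_A,P_B)$ restricted to $L \cup R$ against connected components of $H$.

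Before the two inclusions I would record two bookkeeping facts. First, the edge $(\ell_i, r_i)$ forces $\ell_i$ and $r_i$ into the same component for every $i$, so the components of $G(P_A,P_B)$ do induce a well-defined partition of $[n]$ under the identification $i \leftrightarrow \{\ell_i, r_i\}$; this is what ``the partition induced on $L$ and $R$'' means. Second, every $a_k$ with $S_k = \emptyset$ is a degree-one vertex attached to $\ell_*$, and likewise every such $b_k$ is attached to $r_*$, so no path between two vertices of $L \cup R$ ever routes through one of these vertices and these leaves create no identifications among $[n]$; in the \twopartition\ construction the sets $A, B$ and the vertices $\ell_*, r_*$ are absent altogether, so the issue does not even arise there.

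Now the two directions. For ``$H$-adjacency implies $G$-connectivity'': if $i,j$ lie in a common block $S_k$ of $P_A$ then $\ell_i - a_k - \ell_j$ is a path in the general construction and $(\ell_i,\ell_j)$ is an edge in the \twopartition\ construction; symmetrically, if $i$ and $j$ share a block of $P_B$ then $r_i$ and $r_j$ are connected and the bridges $(\ell_i,r_i),(\ell_j,r_j)$ connect $\ell_i$ to $\ell_j$. Concatenating these along any $H$-path from $i$ to $j$ shows $\ell_i,\ell_j$ lie in the same component of $G(P_A,P_B)$. For the converse, take any path in $G(P_A,P_B)$ from $\ell_i$ to $\ell_j$; since edges exist only between the ``layers'' $A$--$L$, $L$--$R$, $R$--$B$, the path is an alternation of bridge steps $(\ell_k, r_k)$, of $2$-edge detours $\ell_{j_1} - a_k - \ell_{j_2}$ with $j_1,j_2 \in S_k$, and of $2$-edge detours $r_{j_1} - b_m - r_{j_2}$ with $j_1,j_2 \in S_m'$. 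Reading off, in order, the indices of the $L \cup R$-vertices visited, a bridge step leaves the index unchanged, an $A$-detour moves between two elements of a common $P_A$-block, and a $B$-detour moves between two elements of a common $P_B$-block; hence the recorded index sequence is a walk in $H$ from $i$ to $j$, so $i$ and $j$ lie in the same block of $P_A \vee P_B$.

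I expect the only real care to be in the converse direction's treatment of the auxiliary vertices --- specifically, arguing that any traversal through an $A$- or $B$-vertex stays within a single block of the corresponding partition (clean, since those vertices have neighbors only in one part), and that the dummy leaves hanging off $\ell_*$ and $r_*$ neither merge distinct blocks nor obstruct the paths built in the forward direction. Everything else is routine path concatenation, and the argument is uniform across the \partition\ and \twopartition\ versions once the auxiliary structure is stripped away.
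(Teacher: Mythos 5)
Your proposal is correct and follows essentially the same route as the paper: both reduce the statement to the standard characterization of $P_A \vee P_B$ as reachability via shared blocks of $P_A$ or $P_B$ (your auxiliary graph $H$ is exactly the paper's ``reachable'' relation) and then identify that reachability with connectivity between $\ell_i$ and $\ell_j$ in $G(P_A,P_B)$. The only difference is where the detail sits: the paper explicitly proves the join--reachability equivalence by a refinement/minimality argument and leaves the path correspondence as an observation, whereas you cite the lattice fact as standard and instead spell out the path decomposition (bridges, $A$- and $B$-detours, dummy leaves), which is a fair trade and introduces no gap.
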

\begin{proof}
  Call two elements \(a\) and \(b\) \textit{reachable} from each other if there exists a sequence of
  distinct elements \(e_0, e_1, \dots e_t, 1 \le t \le n\) such that \(e_0 = a\), \(e_t = b\) and each
  pair \((e_i,e_{i+1})\) either belongs to the same part of \(P_A\) or the same part of \(P_B\). Any
  partition in which all reachable elements are in the same part have both \(P_A\) and \(P_B\) as
  refinements.

  We claim that two elements belong to the same part of \(P_A \vee P_B\) if and only if they are
  reachable from each other. The backward direction is true because \(P_A\) and \(P_B\) are both
  refinements of \(P_A \vee P_B\). The forward direction is true because if \(a\) and \(b\) are not
  reachable from each other but still belong to the same part \(S\) of \(P_A \vee P_B\) then we can
  refine the part \(S\) to be \(S_a, S_b\) where \(S_a\) is the set of all elements in \(S\) that are
  reachable from \(a\) and \(S_b\) is the set of all elements in \(S\) that are reachable from
  \(b\). It is easy to see that \(S_a\) and \(S_b\) are disjoint. Let \(P'\) be the partition \(P_A \vee P_B\)
  where \(S\) is further refined to be \(S_a, S_b, S \setminus (S_a \cup S_b)\). Note that both \(P_A\)
  and \(P_B\) still remain refinements of the \(P'\) which contradicts the minimality of the join.

  The theorem follows by observing that \(i\) and \(j\) are reachable from each other if and only if
  there is a path from \(\ell_i\) to \(\ell_j\) (and consequently from \(r_i\) to \(r_j\)) in \(G(P_A,P_B)\).
\end{proof}

\subsection{Reductions from 2-party \conn\ and \multicycle}
\label{sec:bcc-simulation}

We now show reductions from 2-party \conn\ to \conn\ in the KT-1 model and from 2-party
\multicycle\ to \multicycle\ in the KT-1 model.  Given an \(r\)-round \textsc{KT-1} algorithm
\(\mathcal{A}\), Alice and Bob will simulate the algorithm with \(G(P_A, P_B)\)
as the input graph. Alice hosts vertices in \(A \cup L\) and Bob hosts vertices in
\(B \cup R\).  For $1 \le i \le n$, the IDs of vertices \(a_i\), \(\ell_i\), \(r_i\), and \(b_i\) are \(i\), \(n + i\), \(2n + i\), and \(3n + i\) respectively. So both
parties know the ID's of all vertices as well as the ID's of neighbors of all hosted vertices in \(G(P_A, P_B)\) and hence, the initial knowledge of hosted vertices.

In order to simulate round \(t\) of \(\mathcal{A}\), Alice
and Bob need to compute the states of all hosted vertices after round \(t\) of
\(\mathcal{A}\). The state of a vertex \(v\) after round \(t\) depends on the initial knowledge and the transcript \(\tau(v, t)\) of \(v\). Assume that Alice and Bob know the
states of all the vertices they host after round \(t-1\). Alice and Bob send a message from \(\{0, 1, \bot\}^{2n}\) to each other. These messages denote the characters their hosted vertices broadcast in round \(t\), in increasing order of
ID. Therefore, they know the sender ID of a character from the position of the character in the
message. This enables Alice and Bob to compute the transcript \(\tau(v, t)\) and hence the state after round \(t\) of all hosted vertices \(v\).

Therefore, in simulating each round, Alice and Bob exchange exactly \(O(n)\) bits with each other and
the total communication complexity of the protocol is \(O(rn)\). If \(\mathcal{A}\) solves the \conn
or \multicycle problems, then using corollaries \ref{corollary-det-partition} and
\ref{corollary-det-eqpartition} respectively and Theorem \ref{thm:partition-to-2p-connectivity}, we obtain the
following result.
\begin{theorem}
  \label{thm:kt1-det-lb}
  The round complexity of a deterministic algorithm for solving the \conn and \multicycle problems in the \textsc{KT-1} model is \(\Omega(\log n)\).
\end{theorem}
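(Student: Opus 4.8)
The plan is to obtain this theorem by composing the two reductions built in Sections~\ref{sec:partition-to-2-party-conn} and~\ref{sec:bcc-simulation} with the communication lower bounds of Corollaries~\ref{corollary-det-partition} and~\ref{corollary-det-eqpartition}. Concretely, suppose for contradiction that $\mathcal{A}$ is a deterministic \textsc{KT-1} algorithm solving \conn in $r$ rounds. Given an instance $(P_A, P_B)$ of \partition, Alice and Bob construct $G(P_A, P_B)$, with Alice hosting the vertices $A \cup L$ and Bob hosting $B \cup R$, using the fixed ID assignment from Section~\ref{sec:bcc-simulation} (vertex $a_i$ gets ID $i$, $\ell_i$ gets $n+i$, $r_i$ gets $2n+i$, $b_i$ gets $3n+i$). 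By Theorem~\ref{thm:partition-to-2p-connectivity}, $G(P_A,P_B)$ is connected iff $P_A \vee P_B = \mathbf{1}$, so deciding \conn on this graph decides \partition; the \multicycle case is handled identically after replacing the construction by its $A$-free, $B$-free variant and \partition by \twopartition.

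Next I would spell out the round-by-round simulation of $\mathcal{A}$ by the two parties, charging $O(n)$ bits of communication per round. The induction hypothesis is that before round $t$ both parties know the full state (initial knowledge plus transcript through round $t-1$) of every vertex they host. The base case holds because each party knows all $4n$ IDs and, for each hosted vertex, the IDs sitting on its ports in $G(P_A,P_B)$ — which is exactly the initial knowledge of a \textsc{KT-1} vertex. For the inductive step, each party locally computes the character in $\{0,1,\bot\}$ broadcast in round $t$ by each of its hosted vertices, then sends the concatenation of these characters in increasing order of ID, a string in $\{0,1,\bot\}^{2n}$. Because the broadcast model delivers the same character to all neighbors and the \textsc{KT-1} port labeled $ID(u)$ is precisely the port on which $u$'s broadcast arrives, each party can reconstruct, for every hosted vertex $v$, which character arrived on which port, hence the updated transcript $\tau(v,t)$ and the new state. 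After $r$ rounds the total communication is $O(rn)$ bits; each party then reads off the YES/NO outputs of its hosted vertices and they exchange one final bit to agree on the system's output.

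To finish, Corollary~\ref{corollary-det-partition} states that any deterministic protocol for \partition requires $\Omega(n\log n)$ bits, so $O(rn) = \Omega(n\log n)$, forcing $r = \Omega(\log n)$. For \multicycle, the modified construction yields a $2$-regular graph whose components are cycles, each of length at least $4$ when $P_A,P_B$ are \twopartition instances (the shortest possible component is $\ell_i$--$r_i$--$r_j$--$\ell_j$--$\ell_i$ when $\{i,j\}$ is a part of both partitions), so the \multicycle promise is met; reducing from \twopartition and invoking Corollary~\ref{corollary-det-eqpartition} in place of Corollary~\ref{corollary-det-partition} gives the same $\Omega(\log n)$ bound.

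I expect the one step needing care to be the reconstruction argument in the simulation: verifying that the positional encoding of the broadcast characters, together with the \textsc{KT-1} convention that ports carry neighbor IDs, genuinely suffices for each party to recompute every hosted vertex's transcript with no further information — in particular that no vertex ever needs to know anything about the non-edges of $G(P_A,P_B)$, which is exactly where both the broadcast nature of the model and the \textsc{KT-1} labeling are essential. The remaining pieces are bookkeeping about IDs and message alignment.
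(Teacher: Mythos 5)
Your proposal is correct and follows essentially the same route as the paper: reduce \partition (resp.\ \twopartition) to 2-party \conn (resp.\ \multicycle) via $G(P_A,P_B)$ and Theorem~\ref{thm:partition-to-2p-connectivity}, simulate the \textsc{KT-1} algorithm with $O(n)$ bits per round using ID-ordered broadcast characters, and invoke Corollaries~\ref{corollary-det-partition} and~\ref{corollary-det-eqpartition} to force $r=\Omega(\log n)$. The reconstruction step you flag is exactly the point the paper relies on (ports carry neighbor IDs, so positional encoding suffices), and your handling of it, including the length-$\ge 4$ cycle check, matches the paper's argument.
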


\subsection{Information-theoretic Lower Bound for \connComps}
J\'{a} J\'{a} \cite{JaJaJACM1984} proves a lower bound for 2-party \connComps\ and points out
that his techniques may not work for decision problems, indicating that it might be easier to prove
lower bounds for \connComps.  This motivates us to consider the
\connComps\ problem as a lower bound candidate, closely related to \conn, but for which we may
be able to prove an $\Omega(\log n)$ lower bound in the KT-1 model, \textit{even for constant-error
Monte Carlo algorithms}.  It turns out that we are able to prove this result by combining the reductions
described in the previous section with information-theoretic techniques.  We first define the
2-party problem \partitionComp\ which is closely related to \partition, but requires an
output with a large representation.  As in \partition, Alice and Bob are respectively
given set partitions $P_A$ and $P_B$ of $[n]$ and at the end of the communication
protocol for \partitionComp, Alice and Bob are required to output the join $P_A \vee P_B$.  From
Theorem \ref{thm:partition-to-2p-connectivity}, we get that if there is a $t$-round, $\epsilon$-error
Monte Carlo algorithm $\mathcal{A}$ for \connComps\ in the KT-1 model, then there is an
$\epsilon$-error Monte Carlo protocol that solves \partitionComp\ with communication complexity $t
\cdot n$.

Consider the following distribution over inputs of \textsc{PartitionComp}: Alice's input \(P_A\) is
chosen uniformly at random from the set of all partitions and Bob's partition is fixed to be the
finest partition, i.e., \(P_B = (1)(2)(3)\dots(n)\).  With $P_B$ fixed in this manner, $P_A \vee P_B
= P_A$ and at the end of the protocol Bob learns $P_A$. Since $P_A$ is chosen from the uniform distribution,
it's initial entropy is high -- $\Theta(n \log n)$ since the support of the distribution has size $2^{\Theta(n \log n)}$.
Therefore Bob will learn a lot of information by the end of the protocol.
This idea is formalized in the proof of the following theorem.
This proof also has to deal with the complication that the protocol has constant error probability.

\begin{theorem}
  \label{thm:kt1-conn-comp-lb}
  For any constant \(0 < \epsilon < 1\), the round complexity of an \(\epsilon\)-error randomized Monte Carlo algorithm that solves the \connComps problem in the \textsc{KT-1} version of the \BCC\((1)\) model is $\Omega(\log n)$.
\end{theorem}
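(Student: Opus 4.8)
The plan is to reduce from \partitionComp\ under the hard distribution already described and lower-bound the information Bob must learn. By Theorem \ref{thm:partition-to-2p-connectivity}, a $t$-round $\epsilon$-error Monte Carlo \connComps\ algorithm in KT-1 yields an $\epsilon$-error protocol $\Pi$ for \partitionComp\ with communication $t \cdot n$: Alice hosts $A \cup L$, Bob hosts $B \cup R$, each round costs $O(n)$ bits, and the connected-component labels output at the vertices of $L$ reveal $P_A \vee P_B$ to Alice (and at $R$ to Bob). Fix $P_B = (1)(2)\cdots(n)$ and draw $P_A$ uniformly from all $B_n$ partitions of $[n]$, so $P_A \vee P_B = P_A$ and $H(P_A) = \log B_n = \Theta(n \log n)$. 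Let $T$ denote the transcript of $\Pi$ on this input and let $O$ be Bob's output; since $\Pi$ has error $\le \epsilon$, $O = P_A$ with probability $\ge 1-\epsilon$ over the joint randomness and the choice of $P_A$.

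The core step is to argue $I(P_A ; T) = \Omega(n \log n)$. First, since $O$ is a (randomized) function of $T$ and Bob's private input/coins, the data-processing inequality gives $I(P_A ; T) \ge I(P_A ; O)$. Then I bound $I(P_A ; O) = H(P_A) - H(P_A \mid O)$ from below using Fano's inequality: because $O$ equals $P_A$ with probability $\ge 1-\epsilon$ and $P_A$ ranges over a set of size $B_n$, we get $H(P_A \mid O) \le h(\epsilon) + \epsilon \log B_n$, where $h$ is binary entropy. Hence $I(P_A ; O) \ge (1-\epsilon)\log B_n - h(\epsilon) = \Omega(n \log n)$ for any constant $\epsilon < 1$. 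Finally, $|T| \le t \cdot n$ bits of communication bound the transcript's entropy, so $I(P_A ; T) \le H(T) \le t \cdot n$; combining, $t \cdot n = \Omega(n \log n)$, i.e.\ $t = \Omega(\log n)$.

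There is one subtlety I would be careful about. In the public-coin model the transcript alone need not determine $P_A$, and the private coins should be handled so Fano applies cleanly — I would fix the public coins to their best value (which only helps the adversary and preserves average-case error), so that over the residual randomness of $P_A$ and Bob's private coins the bound $\Pr[O = P_A] \ge 1 - \epsilon$ still holds, and then condition on these where needed. I would also note that $H(T)$ bounded by communication length holds regardless of the input distribution, and that in our simulation the $\bot$ symbol merely enlarges the per-round alphabet from $2$ to $3$, changing constants but not the $\Omega(\log n)$ conclusion; one can either charge $O(n)$ bits per round or encode $\{0,1,\bot\}^{2n}$ in $O(n)$ bits.

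The main obstacle is the information-theoretic bookkeeping around private versus public coins and making precise that the \connComps-to-\partitionComp\ reduction faithfully preserves the error guarantee under our chosen distribution — in particular, that the labels output at $L$ (or $R$) really do encode $P_A \vee P_B$ in the sense required by Theorem \ref{thm:partition-to-2p-connectivity}, so that Bob can recover $P_A$ from his local outputs without further communication. Once that is nailed down, the Fano plus data-processing plus transcript-length chain is routine.
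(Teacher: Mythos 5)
Your proposal is correct and follows essentially the same route as the paper: the same reduction from \connComps\ to \partitionComp\ with the hard distribution $P_B = (1)(2)\cdots(n)$ and $P_A$ uniform, followed by lower-bounding the mutual information between $P_A$ and the transcript by $\Omega(n\log n)$. The only cosmetic differences are that you invoke Fano's inequality plus data processing on Bob's output and fix the public coins, whereas the paper invokes Yao's minimax theorem and bounds $H(P_A\mid \Pi)\le \epsilon H(P_A)$ directly via the error transcripts; both yield the same $(1-\epsilon)\Theta(n\log n)$ bound and hence $t=\Omega(\log n)$.
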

\begin{proof}
Using Yao's minimax theorem (Theorem
\ref{thm:yao-minimax-bcc}) we can assume that all protocols are deterministic but are allowed to
make an error on \(\epsilon\)-fraction of the input, weighted by \(\mu\). Although appealing to
Yao's theorem is not necessary, it allows us to simplify the exposition. Let \(\Pi\)
denote the transcript of a 2-party protocol that solves \textsc{PartitionComp} and let \(|\Pi|\) denote the length of the longest transcript produced by \(\Pi\) on any input. We know that
\[|\Pi| \ge H(\Pi(P_A, P_B)) \ge I(\Pi(P_A, P_B); P_A, P_B) = I(P_A, P_B; \Pi(P_A, P_B)) = I(P_A; \Pi(P_A, P_B))\]

\noindent
where the last equality follows from the fact that \(P_B\) is fixed according to \(\mu\). From the
definition of mutual information, \(I(P_A; \Pi(P_A, P_B)) = H(P_A) - H(P_A| \Pi(P_A, P_B))\).
Alice's input \(P_A\) is uniformly distributed among all $B_n = 2^{\Theta(n \log n)}$ set partitions according to the hard
distribution \(\mu\). Therefore \(H(P_A) = \Theta(n \log n)\). Let \(B\) be the set of protocol transcripts
that produce an error on the input \(P_A, P_B\). If \(\Pi(P_A, P_B) \notin B\) then \(H(P_A|\Pi(P_A, P_B)) = 0\)
since the output of the protocol is \(P_A \vee P_B = P_A\). We are guaranteed
that \(\Pr[\Pi(P_A, P_B) \in B] \le \epsilon\). Therefore, the second term can be bounded as
follows.
\begin{align*}
  H(P_A| \Pi(P_A, P_B)) &= \sum_{\pi} \Pr[\Pi(P_A, P_B) = \pi] H(P_A | \Pi(P_A, P_B) = \pi) \\
                        &= \sum_{\pi \in B}\Pr[\Pi(P_A, P_B) = \pi] H(P_A | \Pi(P_A, P_B) = \pi) \le \epsilon H(P_A)
\end{align*}

\noindent
Where the last inequality follows from the fact that \(H(X|Y) \le H(X)\) for any \(X, Y\). This
implies \(I(P_A; \Pi(P_A, P_B)) = \Omega(n \log n)\) which proves that any \(\epsilon\)-error
randomized protocol that solves the \textsc{PartitionComp} problem has communication complexity of
\(\Omega(n \log n)\). This in turn implies that \(t = \Omega(\log n)\) which proves the theorem.
\end{proof}

\section{Future Work}
\label{sec:future-work}
In this paper, we used various techniques to obtain better lower bounds for \conn in the \BCC\((1)\) model.
However, these bounds are still quite weak and the gap between these lower bounds and the best upper bound
is substantial.
The fundamental question that motivated this paper, one that is still open is this.

\begin{question}
  Can we obtain \(\omega(\log n)\) round lower bounds for \conn in the \BCC\((1)\) model or show that this is not possible by
	designing an algorithm running in $O(\log n)$ rounds?
\end{question}

Another way to ask this question is can we obtain super-constant round lower bounds in the \BCC\((\log n)\) model?
It is worth noting again that we have a deterministic upper bound for \conn of \(O(\log n / \log \log n)\) \cite{jurdzinski18_mst_o_round_conges_clique}
in \BCC$(\log n)$, whereas our results do not imply a better than $\Omega(1)$ lower bound in \BCC$(\log n)$.

A second open question, one that is more relevant to the techniques used in this paper is the following.

\begin{question}
  Can we get an \(\Omega(n \log n)\) lower bound on the randomized constant-error communication complexity for
	the \partition and \twopartition problems?
\end{question}

Using the reductions in this paper, a positive answer to this question would imply an $\Omega(\log n)$ lower bound
for \conn\ in \BCC$(1)$ KT-1 model even for constant-error randomized algorithms.

\printbibliography

\end{document}